\newlength{\figwidths}
\newlength{\expwidths}
\newlength{\expwidthd}
\newtheorem{example}{Example}
\newtheorem{theorem}{Theorem}
\newtheorem{proposition}[theorem]{Proposition}
\DeclareMathOperator*{\argmin}{arg\,min}
\def\BibTeX{{\rm B\kern-.05em{\sc i\kern-.025em b}\kern-.08em
    T\kern-.1667em\lower.7ex\hbox{E}\kern-.125emX}}
\begin{document}

\title{Learning Individual Models for Imputation (Technical Report)}

\author{\IEEEauthorblockN{
Aoqian Zhang, 
Shaoxu Song, 
Yu Sun, 
Jianmin Wang}
\IEEEauthorblockA{
\textit{BNRist, Tsinghua University}, Beijing, China \\
\{zaq13, sxsong, sy17, jimwang\}@tsinghua.edu.cn}
}

\maketitle

\begin{abstract}
Missing numerical values are prevalent, 
e.g., owing to unreliable sensor reading, collection and transmission among heterogeneous sources.
Unlike categorized data imputation over a limited domain, 
the numerical values suffer from two issues: 
(1) \emph{sparsity problem}, the incomplete tuple may not have sufficient complete neighbors sharing the same/similar values for imputation, owing to the (almost) infinite domain; 
(2) \emph{heterogeneity problem}, 
different tuples may not fit the same (regression) model.
In this study, enlightened by the conditional dependencies that  hold conditionally over certain tuples rather than the whole relation,
we propose to learn a regression model individually for each complete tuple
together with its neighbors.
Our \textsf{IIM}, \emph{Imputation via Individual Models}, 
thus no longer relies on sharing similar values 
among the $\mathit{k}$ complete neighbors for imputation, 
but utilizes their regression results by the aforesaid learned individual (not necessary the same) models. 
Remarkably, we show that some existing methods are indeed special cases of our \textsf{IIM},
under the extreme settings of the number $\ell$ of learning neighbors considered in individual learning. 
In this sense, a proper number $\ell$ of neighbors is essential to learn the individual models  (avoid over-fitting or under-fitting). 
We propose to adaptively learn individual models over various number $\ell$ of neighbors for different complete tuples.
By devising efficient incremental computation, 
the time complexity of learning a model \emph{reduces from linear to constant}. 
Experiments on real data demonstrate that our \textsf{IIM} with adaptive learning achieves higher imputation 
accuracy than the existing approaches.
\end{abstract}

\section{Introduction}\label{sect:introduction}

Missing values are commonly observed \cite{DBLP:journals/tkde/0001SZLS16}, 
especially over numerical data \cite{DBLP:journals/dase/FirmaniMSB16}, for instance, owing to 
failures of sensor reading devices \cite{DBLP:conf/icde/JefferyAFHW06}, 
poorly handling overflow during calculation, 
mismatching in integrating heterogeneous sources \cite{DBLP:books/daglib/0029346}, 
and so on. 
Simply discarding the incomplete tuples with missing values makes the data even more incomplete. 

\begin{figure}[t]
\centering
\begin{minipage}{2.4in}
\includegraphics[width=0.6\figwidths]{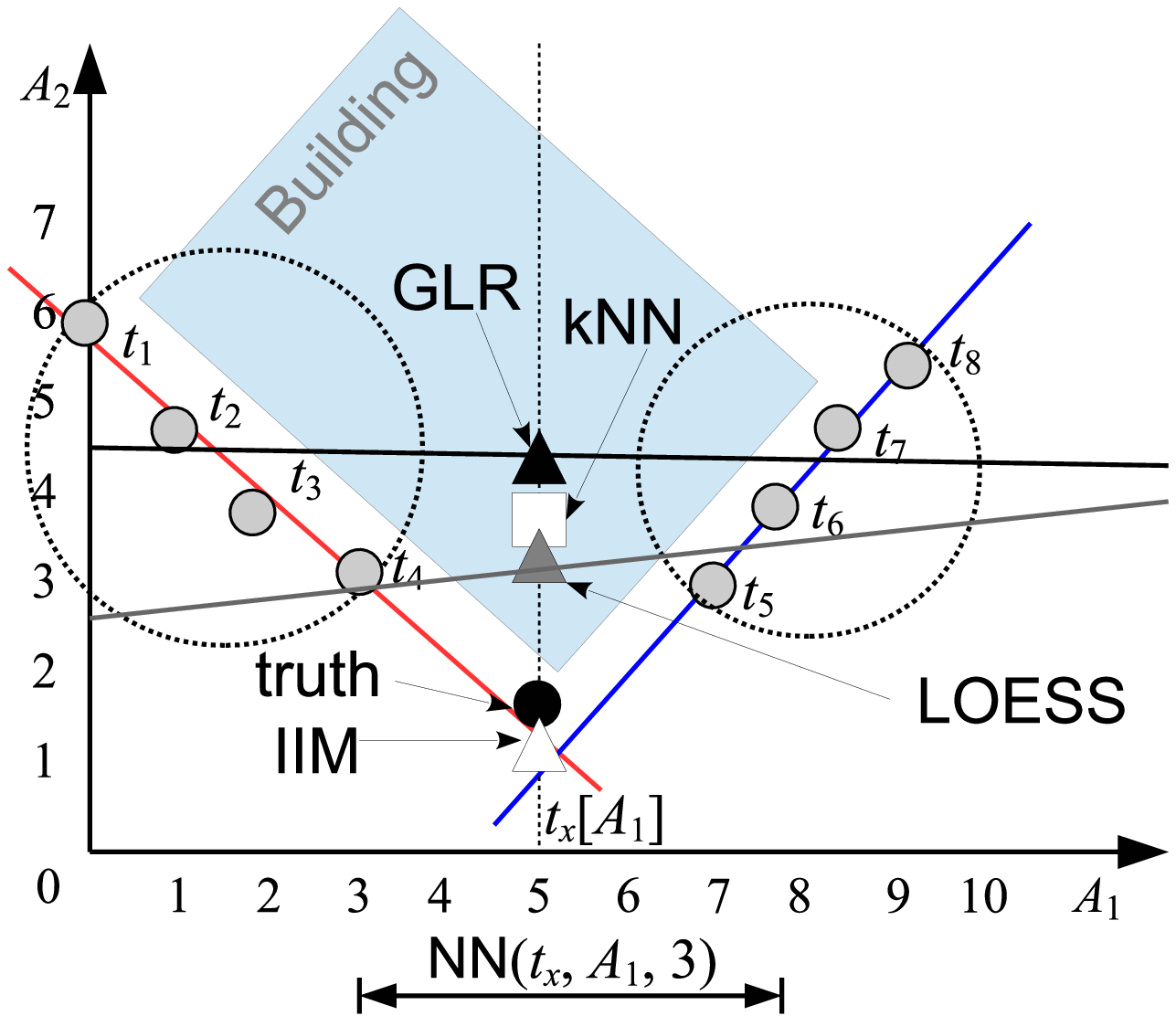} 
\end{minipage}
\begin{minipage}{1in}
\begin{footnotesize}
\begin{tabular}{lll}
 \hline\noalign{\smallskip}  Tid & $A_1$  & $A_2$  \\ \noalign{\smallskip}
 \hline\noalign{\smallskip}
 $\mathit{t}_1$ & $0$ & $5.8$ \\ \noalign{\smallskip}
 $\mathit{t}_2$ & $0.8$ & $4.6$ \\ \noalign{\smallskip}
 $\mathit{t}_3$ & $1.9$ & $3.8$ \\ \noalign{\smallskip}
 $\mathit{t}_4$ & $2.9$ & $3.2$ \\ \noalign{\smallskip}
 $\mathit{t}_5$ & $6.8$ & $3$  \\ \noalign{\smallskip}
 $\mathit{t}_6$ & $7.5$ & $4.1$ \\ \noalign{\smallskip}
 $\mathit{t}_7$ & $8.2$ & $4.8$ \\ \noalign{\smallskip}
 $\mathit{t}_8$ & $9$ & $5.5$ \\ \noalign{\smallskip}
 \hline\noalign{\smallskip}
 $\mathit{t}_\mathit{x}$ & $5$ & \textbf{--} \\ \noalign{\smallskip}
 \hline\noalign{\smallskip}
\end{tabular}
\end{footnotesize}
\end{minipage}
\caption{Motivation example of two-dimension data, where $\mathit{t}_\mathit{x}[\mathit{A}_2]$ is a missing value with ground truth 1.8. 
Our \textsf{IIM} learns the individual regression models (blue and red lines) w.r.t.\ heterogeneous neighbors ($\mathit{t}_4$ and $\mathit{t}_5$), 
instead of a same model (black or gray line) for all neighbors
}
\label{fig-example-correct}
\end{figure}

\subsection{Motivation}
\label{sect-motivation}

We notice that the existing imputation techniques \cite{little2014statistical,altman1992introduction,batista2002study} 
utilizing either complete attributes or complete tuples
suffer from two major issues, 
especially when handling numerical data from various sources. (See examples below.)

\subsubsection{Sparsity problem}
\label{sect-sparity-porblem}

The imputation via finding the closest complete tuple relies on the assumption that 
there exist neighbors sharing the same values.
Unfortunately, owing to the sparsity issue, such an assumption is often not the case in practice, 
e.g., $\mathit{t}_\mathit{x}$ in Figure \ref{fig-example-correct}
does not have any complete tuple sharing the same value. 
Thereby, the \textsf{kNN} method \cite{altman1992introduction} proposes to aggregate the values of complete neighbors. 

Owing to sparsity, 
there may not exist a complete tuple containing exactly the actual correct value of the incomplete tuple.
For this reason, it is also studied to impute a missing value from the regression model \cite{little1992regression}.
Instead of 
using a value directly from the complete tuple
(often unlikely to be the actual correct value owing to sparsity), 
the prediction based approach (\textsf{GLR}) \cite{little1992regression} assumes tuples sharing regression models.
For instance, in Figure \ref{fig-example-correct},
$\mathit{t}_5$ and $\mathit{t}_6$ have different values, but share the same regression model (blue line).

It is worth noting that 
even a complete tuple ($\mathit{t}_5$) is trusted (with no error), its value cannot be directly used as the imputation of the incomplete tuple ($\mathit{t}_\mathit{x}$), 
owing to the aforesaid sparsity issue.
However, this tuple $\mathit{t}_5$ can be used to learn a regression model. 
The incomplete tuple $\mathit{t}_\mathit{x}$ may not directly use the value of $\mathit{t}_5$, 
but use the value predicted by regression model of $\mathit{t}_5$, 
since neighbors may not share the same value but the regression model.

\subsubsection{Heterogeneity problem}
\label{sect-heterogeneity-porblem}

Since data often describe various facts or are collected from heterogeneous sources, 
no global semantics may fit the entire data \cite{DBLP:conf/icde/SahaS14}.
That is, there may not exist a single regression model that captures the semantics over all the data. 
Or generally speaking, \emph{one size does not fit all}. 
For instance, in Figure \ref{fig-example-correct}, 
a single \textsf{GLR} model (black line) cannot fit all the data points in different streets.

To address the heterogeneity problem, instead of assuming the same regression,
we argue to learn a fine-grained individual regression model that is only valid locally over a complete tuple and its neighbors.
For instance, in Figure \ref{fig-example-correct},
the individual regression model (red line) is only valid over $\mathit{t}_4$ and its neighbors such as $\mathit{t}_3$. 
Tuple $\mathit{t}_5$ in another street could have another regression model (blue line) that is distinct from $\mathit{t}_4$.
The imputation can thus utilize these more accurate individual models, 
instead of the imprecise global model (\textsf{GLR}) that does not fit all the data. 
The benefit of the imputation by individual models (\textsf{IIM}) would be the clearly higher accuracy than that of \textsf{GLR} with a single (inaccurate) global model, 
as the results shown in Table \ref{table-all-type}.

\begin{example}
Consider a check-in dataset
of two dimension in Figure \ref{fig-example-correct} for simplicity (more general, high dimensional data are considered in Section \ref{sect-experiment} of experiments).
Tuples $\mathit{t}_1-\mathit{t}_8$ (denoted by gray dots) represent 8 observations in the streets outside a building. 
There is another tuple $\mathit{t}_\mathit{x}$ with $\mathit{t}_\mathit{x}[\mathit{A}_1]=5$ observed but $\mathit{t}_\mathit{x}[\mathit{A}_2]$ missing during transmission (the truth of $\mathit{t}_\mathit{x}[\mathit{A}_2]$ is denoted by the black dot).

The nearest neighbor based imputation finds $\mathit{k}$ (say $\mathit{k}=3$) tuples that are most similar to $\mathit{t}_\mathit{x}$ on the complete attribute $\mathit{A}_1$, i.e., $\mathit{t}_4,\mathit{t}_5,\mathit{t}_6$. 
The mean value of three tuples on $\mathit{A}_2$ is then considered as the imputation of $\mathit{t}_\mathit{x}$ (\textsf{kNN}, white square).
Unfortunately, since no tuple is sufficiently close to the truth of $\mathit{t}_x$ (owing to sparsity), 
the imputation is not accurate.

The global linear regression (represented by solid black line) obviously cannot capture the difference between observations $\mathit{t}_1-\mathit{t}_4$ and $\mathit{t}_5-\mathit{t}_8$ in two streets. 
The imputation by the global regression (\textsf{GLR}, black triangle) is not accurate. 

The local regression assumes a same regression locally over the neighbors $\mathit{t}_4,\mathit{t}_5,\mathit{t}_6$ of the incomplete tuple $\mathit{t}_\mathit{x}$, found on the complete attribute $\mathit{A}_1$. 
Again, owing to the heterogeneity issue, $\mathit{t}_5,\mathit{t}_6$ and $\mathit{t}_4$ from two streets, respectively, indeed have different regression models. 
The imputation by the local regression (\textsf{LOESS}, gray triangle) is not accurate either. 
\end{example}

The idea of \textsf{IIM} is enlightened by the conditional dependencies \cite{DBLP:conf/icde/BohannonFGJK07}, 
which only hold conditionally over certain tuples rather than the whole relation. 
That is, the constraint does not fit all the data, but only applies to a subset of tuples specified by certain conditions. 
Analogously, a regression model may not fit all the data, but only applies ``conditionally'' to the nearby neighbors of a tuple. 
Thereby, we propose to learn a regression model individually for each complete tuple and its neighbors, 
instead of a single global regression model that cannot fit all the tuples.

\subsection{Proposal}
\label{sect-proposal}

The Imputation via Individual Models (\textsf{IIM}) proposed in this paper thus has two phases: 
(1) the \emph{learning phase} learns individually a regression for each complete tuple
together with its neighbors,
e.g., $\mathit{f}_1,\dots,\mathit{f}_3$ for $\mathit{t}_1,\dots,\mathit{t}_3$, 
respectively, in Figure \ref{fig:model}; and
(2) the \emph{imputation phase} finds $\mathit{k}$ complete imputation neighbors of the incomplete tuple, 
and aggregate the regression results produced by the aforesaid learned individual regression models of the $\mathit{k}$ complete neighbors. 

For example, 
$\mathit{t}_\mathit{x}$ could use the regression models of neighbors $\mathit{t}_4, \mathit{t}_5$ and $\mathit{t}_6$,
and aggregate the results of different regressions as the imputation (\textsf{IIM}, white triangle in Figure~\ref{fig-example-correct}).

A key issue is how to perform individual learning for each complete tuple. 
To learn the individual model,
it needs to find a 
number of $\ell$ learning neighbors that are similar to the tuple.
A different number of learning neighbors lead to various 
learned models.
Determining the number $\ell$ of neighbors for learning
is highly non-trivial. 
For each complete tuple, 
(1) if the number $\ell$ is too small, the learned regression model may \emph{overfit} the data; 
(2) on the other hand, if $\ell$ is too large (e.g., considering almost all the heterogeneous tuples in the dataset like global regression), 
it leads to \emph{under-fitting}.
(We address the overfitting and under-fitting issues by adaptive learning below.)

\subsection{Contribution}
\label{sect-contribution}

Our major contributions in this paper are summarized as:

(1) We propose a novel approach \textsf{IIM} of Imputation via Individual Models (Section \ref{sect-framework}), 
with learning and imputation phases as aforesaid.
The heterogeneity issue is addressed by learning an individual model for each tuple
together with its neighbors.
\textsf{IIM} does not directly use the values of complete neighbors for imputation (but their models) and thus tackles the sparsity problem. 

(2)
We prove that some existing approaches are indeed the special cases of \textsf{IIM} under extreme settings 
(i.e., $\ell=1$ or $\ell=\mathit{n}$ in Propositions \ref{the:l0-equal} and \ref{the-LR-equal} in Section \ref{sect-subsume}). 
It does not only illustrate the rationale of our proposal, but also motivate us to adaptively determine a proper $\ell$ (in between the extreme $1$ and $\mathit{n}$) for each tuple to avoid over-fitting or under-fitting.

(3) We adaptively learn the individual model for each complete tuple over a distinct number $\ell$ of learning neighbors (Section \ref{sect-adaptive}).  
By introducing a validation step, 
we determine a proper number $\ell$ and the corresponding learned model for each  complete tuple, 
which can impute most accurately the other complete tuples (considered as validation set). 
Experiments show that the adaptively learned individual models indeed lead to better imputation results.
Efficient incremental computation is devised for adaptive learning, 
which reduces the time complexity of learning a model 
from linear to constant.

(4) We conduct extensive experiments over real datasets (Section \ref{sect-experiment}).  
The results demonstrate that our \textsf{IIM}  has  significantly better performance than the state-of-the-art imputation methods.
Remarkably, we show that the proposed imputation indeed 
improves the accuracy of  classification application over the data with 
\emph{%
real-world missing values. 
}

Table \ref{table-notations} 
lists the frequently used notations. 
 \begin{table}[t]
 \caption{Notations}
 \label{table-notations}
 \centering
 \begin{tabular}{rp{3in}}
 \hline\noalign{\smallskip}
 $\mathcal{R}$ & schema on $\mathit{m}$ attributes \\ \noalign{\smallskip}
 $\mathit{r}$ & relation of $\mathit{n}$ complete tuples \\ \noalign{\smallskip}
 $\mathit{t}_\mathit{x}$ & incomplete tuple with missing value  \\ \noalign{\smallskip}
 $\mathit{A}_\mathit{x}$ & incomplete attribute in $\mathit{t}_\mathit{x}$, $\mathit{A}_\mathit{m}$ by default for simplicity \\ \noalign{\smallskip}
 $\mathcal{F}$ & complete attributes in $\mathit{t}_\mathit{x}$, $\mathcal{R}\setminus\{\mathit{A}_\mathit{m}\}$ by default \\ \noalign{\smallskip}
 $\phi$ & parameter of linear regression model \\ \noalign{\smallskip}
 $\ell$ & number of learning neighbors for learning the individual model of a complete tuple \\ \noalign{\smallskip}
 $\mathit{k}$ & number of imputation neighbors for imputing an incomplete tuple \\ \noalign{\smallskip}
 \hline
 \end{tabular}
 \end{table}

\section{Preliminary and Related Work}\label{sect:preliminary}

In this section, we introduce preliminaries and categorize major imputation approaches into two classes in Table \ref{table-algorithm}.
The key ideas of imputation based on tuple models and attribute models are presented in Figure \ref{fig:model}. 
We discuss that each category of existing techniques suffers from either  the heterogeneity or the sparsity problem.
It motivates us to devise the novel imputation via individual models in Section~\ref{sect-framework}.

Consider a relation $\mathit{r}$ of $\mathit{n}$ tuples 
$\mathit{r}=\{\mathit{t}_1, \mathit{t}_2, \ldots, \mathit{t}_\mathit{n}\}$, 
with schema 
$\mathcal{R} = \{\mathit{A}_1, \mathit{A}_2, \ldots \mathit{A}_\mathit{m} \}$
on $\mathit{m}$ attributes.
We denote $\mathit{t}_i[A_j]$ the value of tuple $\mathit{t}_{i}\in\mathit{r}$ on attribute $A_{j}\in\mathcal{R}$.

Let $\mathit{t}_\mathit{x}$ be a tuple over $\mathcal{R}$ with missing value on attribute $\mathit{A}_\mathit{x}$. 
We call $\mathit{A}_\mathit{x}$ the incomplete attribute and 
$\mathcal{F}=\mathcal{R}\setminus\{\mathit{A}_\mathit{x}\}$ the complete attributes.
(For simplicity, we consider $\mathit{A}_\mathit{m}$ as the incomplete attribute by default. 
Missing values on other attributes could be addressed similarly.
Multiple incomplete attributes in a tuple could be addressed one by one.)

\begin{table}[t]
 \caption{Imputation methods considered in (empirical) comparison}
 \label{table-algorithm}
 \centering
 \begin{tabular}{llp{1.6in}}
 \hline\noalign{\smallskip} Approach & Model & Property \\ \noalign{\smallskip}
 \hline\noalign{\smallskip}
 \textsf{Mean} \cite{DBLP:journals/tsmc/FarhangfarKP07} & Tuple & Global average \\ \noalign{\smallskip}
 \textsf{kNN} \cite{altman1992introduction} & Tuple & Local average\\ \noalign{\smallskip}
 \textsf{kNNE} \cite{DBLP:conf/icpr/DomeniconiY04} & Tuple & kNN Ensemble \\ \noalign{\smallskip}
 \textsf{IFC} \cite{DBLP:conf/fuzzIEEE/NikfalazarYBK17} & Tuple & Cluster average \\ \noalign{\smallskip}
 \textsf{GMM} \cite{yan2015missing} & Tuple & Cluster average \\ \noalign{\smallskip}
 \textsf{SVD} \cite{DBLP:journals/bioinformatics/TroyanskayaCSBHTBA01} & Tuple & $k$ most significant eigengenes \\ \noalign{\smallskip}
 \textsf{ILLS} \cite{DBLP:conf/apbc/CaiHL06} & Tuple & Local regression over tuples  \\ \noalign{\smallskip}
 \hline\noalign{\smallskip}
 \textsf{GLR} \cite{little1992regression} & Attribute & Global regression \\ \noalign{\smallskip}
 \textsf{LOESS} \cite{cleveland1996smoothing} & Attribute & Local regression  \\ \noalign{\smallskip}
 \textsf{BLR} \cite{rubin2004multiple} & Attribute & Bayesian linear regression \\ \noalign{\smallskip}
 \textsf{ERACER} \cite{DBLP:conf/sigmod/MayfieldNP10} & Attribute & Neighbor regression  \\ \noalign{\smallskip}
 \textsf{PMM} \cite{landerman1997empirical} & Attribute & Predictive mean matching \\ \noalign{\smallskip}
 \textsf{XGB} \cite{DBLP:conf/kdd/ChenG16} & Attribute & Xgboost, tree boosting system \\ \noalign{\smallskip}
 \hline\noalign{\smallskip}
 \end{tabular}
\end{table}

\subsection{Imputation based on Tuple Models}
\label{sect-related-tuple}

\subsubsection{Nearest Neighbor Model \textsf{kNN}}
\label{sect-existing-kNN}

To impute the missing numerical values, a natural idea is
to retrieve similar complete instances from $\mathit{r}$ for imputation, 
known as the \emph{$\mathit{k}$-nearest-neighbor} approach, \textsf{kNN} \cite{altman1992introduction, batista2002study}.

Let
$\textsf{NN}(\mathit{t}_\mathit{x}, \mathcal{F}, \mathit{k})$ 
be  $\mathit{k}$ nearest neighbors of $\mathit{t}_\mathit{x}$ on attributes $\mathcal{F}$ from $\mathit{r}$, 
e.g., with the smallest Euclidean distance \cite{DBLP:conf/kdd/AnagnostopoulosT14}
\begin{align}\label{equation-distance}
\textstyle
\mathit{d}_{x,i} = 
\sqrt{
\frac{
\sum\limits_{\mathit{A}\in\mathcal{F}}(\mathit{t}_{x}[\mathit{A}] - \mathit{t}_{i}[\mathit{A}])^2
}{
|\mathcal{F}|
}
}
\end{align}
where $\mathit{d}_{x,i}$ denotes the distance between tuple $\mathit{t}_\mathit{x}$ and $\mathit{t}_i$ on complete attributes $\mathcal{F}$.

The $\textsf{kNN}$ imputation is in two steps:
(1) find $k$ nearest neighbors $\mathit{T}_\mathit{x} = \textsf{NN}(\mathit{t}_\mathit{x}, \mathcal{F}, \mathit{k})$, 
and 
(2) use the $\mathit{A}_\mathit{m}$ values of neighbors for imputation, 
e.g., by arithmetic mean 
\begin{align}\label{equation-kNN}
\textstyle
\mathit{t}_\mathit{x}'[\mathit{A}_\mathit{m}] = \frac{
\sum\limits_{\mathit{t}_j \in \mathit{T}_\mathit{x}}\mathit{t}_j[A_m]
}{
\mathit{k}
}.
\end{align}

\begin{figure}[t]
\centering
\includegraphics[width=0.8\figwidths]{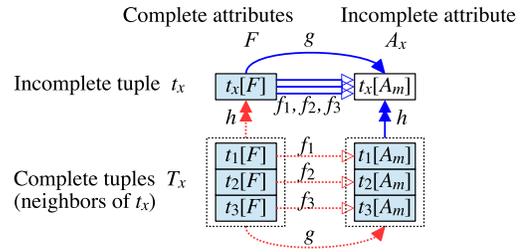}
\caption{Learning (dashed arrows) models over complete data and imputing (solid arrows) the missing value $\mathit{t}_\mathit{x}[\mathit{A}_\mathit{m}]$ by tuple model $h$, attribute model $g$, or individual models $f_1,\dots,f_3$ w.r.t.\ $\mathit{t}_1,\dots,\mathit{t}_3$}
\label{fig:model}
\end{figure}

\subsubsection{Variations of Tuple Models}

The first variation is on the neighbors in step (1) of the \textsf{kNN} imputation. 
\textsf{kNNE} \cite{DBLP:conf/icpr/DomeniconiY04} finds different groups of $\mathit{k}$ neighbors by computing distances on various subsets of features and then combine the imputation results from these different groups.
Instead of $\mathit{k}$ neighbors, 
the \textsf{Mean} method \cite{DBLP:journals/tsmc/FarhangfarKP07}  simply identifies all the tuples (as $\mathit{T}_\mathit{x}$) for aggregation in the following step.
Clustering is also employed to identify the neighbors for imputation, 
e.g.,  \textsf{IFC} \cite{DBLP:conf/fuzzIEEE/NikfalazarYBK17} considering fuzzy k-means \cite{DBLP:conf/rsctc/LiDSS04} 
or \textsf{GMM} \cite{yan2015missing} using the Gaussian mixture model. 
Similarity rules \cite{DBLP:journals/tkde/Song0C14,DBLP:journals/tods/Song011} are also employed to identify the neighbors \cite{DBLP:journals/pvldb/SongZC015,DBLP:journals/tkde/SongSZCW20}. 
Moreover, instead of searching existing data as neighbors, 
the \textsf{SVD} approach \cite{DBLP:journals/bioinformatics/TroyanskayaCSBHTBA01} finds a set of mutually orthogonal expression patterns 
(so-called eigenvectors)
as $\mathit{T}_\mathit{x}$ for aggregation imputation.

The second variation is on the aggregation model in step (2) of the \textsf{kNN} imputation. 
In addition to Formula \ref{equation-kNN}, 
more advanced aggregation considers the distances of neighbors as aggregation weights \cite{DBLP:conf/kdd/AnagnostopoulosT14}.
Furthermore, instead of the model of aggregating $\mathit{t}_j[A_m]$ over $\mathit{t}_j \in \mathit{T}_\mathit{x}$, 
\textsf{ILLS} \cite{DBLP:conf/apbc/CaiHL06} learns a model $\mathit{h}$ for predicting $\mathit{t}_\mathit{x}$ values from $\mathit{T}_\mathit{x}$.
In this sense, the arithmetic mean aggregation in Formula \ref{equation-kNN} is a special $\mathit{h}$ that does not need learning from $\mathit{T}_\mathit{x}[\mathcal{F}]$ and $\mathit{t}_\mathit{x}[\mathcal{F}]$.
We call $\mathit{h}$ a tuple model, and this category the tuple model-based imputation.

\subsubsection{Discussion}
\label{sect-knn-discussion}

The idea of learning over individual tuples and their closest neighbors in our proposal \textsf{IIM} 
is related to past work \textsf{kNN} \cite{altman1992introduction}. 
The difference is that 
to impute the incomplete tuple $\mathit{t}_\mathit{x}$, 
\textsf{kNN} uses (aggregates) directly the values of the $\mathit{k}$-closest neighbors $\mathit{t}_i$ of $\mathit{t}_\mathit{x}$ as the imputation, 
while our \textsf{IIM} learns individual models for the neighbor tuples $\mathit{t}_i$ by considering their $\ell$-closest neighbors $\mathit{t}_j$, respectively. 
The values predicted by the learned models are then aggregated as the imputation.
The defeat of directly using the values of $k$-closest neighbors to impute missing values is that 
owing to sparsity, no sufficient neighbors could be found sharing similar values with incomplete tuple $\mathit{t}_\mathit{x}$. 
For instance, $\mathit{t}_\mathit{x}$ in Figure \ref{fig-example-correct} does not have any tuple sharing highly similar values.
Alternatively, 
we learn a model from the tuple and its $\ell$-closest neighbors. 
Tuples may not share the same values but models. 
For example, $\mathit{t}_\mathit{x}$ in Figure \ref{fig-example-correct} fits the model that is learned from $\mathit{t}_4$ and its neighbors, 
and is thus accurately imputed.

\subsection{Imputation based on Attribute Models}
\label{sect-related-attribute}

\subsubsection{Linear Regression Model \textsf{GLR}}
\label{sect-existing-reg}

Rather than capturing relationships to the complete tuples in $\mathit{r}$, 
another well-known idea is 
to explore the relationships between incomplete and complete attributes, e.g., by the linear regression model \cite{little2014statistical}.

Let $\textsf{LR}(\mathcal{F},\mathit{A}_\mathit{m}, \mathcal{R})$
denote the linear regression model from complete attributes $\mathcal{F}$ to incomplete attribute $\mathit{A}_\mathit{m}$, having
\begin{align} \nonumber
\mathit{t}[A_m] &= \phi[C]1+\phi[A_1]\mathit{t}[A_1] + \ldots + \phi[A_{m-1}]\mathit{t}[A_{m-1}] + \varepsilon \\ 
\label{equation-regression}
  &= (1,\mathit{t}[\mathcal{F}]) \phi + \varepsilon
\end{align}
where $\mathit{t}$ is a tuple over $\mathcal{R}$, 
$\phi=\{\phi[C], \phi[A_1], \ldots, \allowbreak \phi[A_{m-1}]\}^\top$ is the parameter  of linear regression ($\phi[C]$ denotes the constant term),
and
$\varepsilon$ is the error term.

The imputation is thus in two steps:
(1) learn parameter $\phi$ from relation $\mathit{r}$ of complete tuples (e.g., by Ordinary Least Square or Ridge Regression \cite{rao1973linear}, see more details in Section \ref{sect-learning}),
and 
(2) perform the imputation referring to the learned linear regression model, 
\begin{align}\label{equation-imputation-global}
\mathit{t}_\mathit{x}'[A_m] = (1, \mathit{t}_\mathit{x}[\mathcal{F}])\phi.
\end{align}
Since the linear regression is declared on all tuples over $\mathcal{R}$, 
we call this \emph{global linear regression} method, \textsf{GLR}.

\subsubsection{Variations on Attribute Models}

Similar to the idea of aggregating only \textsf{kNN} tuples \cite{altman1992introduction}  
rather than \textsf{Mean} \cite{DBLP:journals/tsmc/FarhangfarKP07} of all tuples in Section \ref{sect-related-tuple}, 
\textsf{LOESS} \cite{cleveland1996smoothing} learns a local regression over the neighbors 
$\textsf{NN}(\mathit{t}_\mathit{x}, \mathcal{F}, \mathit{k})$ 
of $\mathit{t}_\mathit{x}$, 
instead of the global regression over all tuples. 
Statistical analysis could be further employed to linear regression, 
e.g., Bayesian linear regression \textsf{BLR} in the context of Bayesian inference.
(We use the MICE \cite{buuren2010mice} implementation mice.norm in R in experiments.)

The attribute models can cooperate with the tuple models.
The \textsf{ERACER} approach \cite{DBLP:conf/sigmod/MayfieldNP10} further studies the regression model over neighbors, i.e., combining both $g$ and $h$ in Figure \ref{fig:model}.
For instance, the temperature of a sensor is related to its humidity ($g$), 
as well as its neighbors' temperature and humidity ($h$).
The predictive mean matching \textsf{PMM} \cite{landerman1997empirical} does not directly use the value $\mathit{t}_\mathit{x}'[A_m]$ predicated by linear regression as the imputation. 
Instead, it finds neighbors whose predications also by the same linear regression are most similar to the predicated value $\mathit{t}_\mathit{x}'[A_m]$. 
A randomly selected original value $\mathit{t}_j[A_m]$ of the identified neighbors $\mathit{t}_j$ is returned as the imputation. 
The widely used XGboost \cite{DBLP:conf/kdd/ChenG16} (\textsf{XGB}) algorithm learns a set of classification and regression trees and ensembles the results. 
(We use the MICE \cite{buuren2010mice} implementation mice.pmm and library `xgboost' in R in the experiments.)

\subsubsection{Discussion}

Owing to the heterogeneity problem, 
assuming the same regression either globally, locally or randomly (for xgboost) \cite{cleveland1996smoothing} for different tuples could be indefensible.  

\section{Imputation via Individual Learning}\label{sect-framework}

As illustrated in Figure \ref{fig:model}, 
the Imputation via Individual Models (\textsf{IIM}) addresses the heterogeneity and sparsity problems in two aspects, respectively.
(1) 
The learning phase in Section \ref{sect-learning}
learns a linear regression model individually for each tuple 
(together with its neighbors,
e.g., models $f_1, \dots, f_3$ in Figure \ref{fig:model}), 
instead of assuming the same regression for different tuples (with heterogeneity). 
This is enlightened by the conditional dependencies that  hold conditionally over certain tuples \cite{DBLP:conf/icde/BohannonFGJK07}.
(2) 
The imputation phase in Section~\ref{sect-imputation}
aggregates the regression results of multiple individual regression models suggested by different neighbors, 
rather than relying the neighbors to have similar values (suffering sparsity).

\subsection{Learning Phase}
\label{sect-learning}

The \emph{learning phase} learns the parameter $\phi_i$ of the linear regression model (in Formula \ref{equation-regression}) individually for each tuple $\mathit{t}_i\in\mathit{r}$.
The learned individual regression models are then utilized in the imputation in Section \ref{sect-imputation}.

\begin{algorithm}[t]
\caption{Learning($\mathit{r}, \ell$, $\mathcal{F}$, $\mathit{A}_\mathit{m}$)}
\label{algorithm-learning}
 \KwIn{relation $\mathit{r}$ of complete tuples, number $\ell$ of learning neighbors, complete attributes $\mathcal{F}$, incomplete attribute $\mathit{A}_\mathit{m}$}
 \KwOut{$\Phi$ the set of regression parameters $\phi_i$ learned for all tuples $\mathit{t}_i$ in $\mathit{r}$}

\For{each $\mathit{t}_i \in \mathit{r}$}{\label{line-learning-tuple}
  $\mathit{T}_i \leftarrow \textsf{NN}(\mathit{t}_i, \mathcal{F}, \ell)$\; \label{line-learning-lNN} 
  $\phi_i \leftarrow \textsf{LR}(\mathcal{F}, \mathit{A}_\mathit{m}, \mathit{T}_i)$\;
  \label{line-learning-parameter}
}
\Return{$\Phi$ }
\end{algorithm}

Algorithm \ref{algorithm-learning} presents the procedure of individual learning over $\mathit{r}$ for the regression from $\mathcal{F}$ to $\mathit{A}_\mathit{m}$.
For each  $\mathit{t}_i \in \mathit{r}$, 
we consider a set $\mathit{T}_i$ of nearest neighbors
i.e.,
$\textsf{NN}(\mathit{t}_i, \mathcal{F}, \ell)$
in Line \ref{line-learning-lNN},
a.k.a. learning neighbors.
They
are obtained in the same way of 
obtaining $\mathit{k}$ nearest neighbors in the \textsf{kNN} approach, 
$\textsf{NN}(\mathit{t}_\mathit{x}, \mathcal{F}, \mathit{k})$, 
as introduced in Section \ref{sect-existing-kNN}.  
That is, return the tuples with the smallest Euclidean distance on attributes $\mathcal{F}$ \cite{DBLP:conf/kdd/AnagnostopoulosT14}.
In case of sparsity, 
the returned neighbors may not share similar values with the incomplete tuple, 
and thus the \textsf{kNN} approach
directly aggregating the values of nearest neighbors is not accurate. 
To deal with sparsity, we propose to learn regression models over the nearest neighbors, 
and use the learned models to predict the missing value.

Let $\ell$ be the number of $\mathit{t}_i$'s neighbors considered in individual learning, 
namely the number of learning neighbors.
As stated in Section \ref{sect-motivation}, 
the number $\ell$ should be sufficiently large to avoid overfitting, 
but not too large owing to heterogeneity. 
A straightforward idea is to simply consider a fixed number $\ell$ for all the tuples in $\mathit{r}$
(see Section \ref{sect-experiment-adaptive} for empirical results on considering various fixed $\ell$).
More advanced adaptive learning considering distinct number of learning neighbors for various tuples in $\mathit{r}$ is devised in Section \ref{sect-adaptive}.

\subsubsection{Learning Regression Parameter}

Given a set of tuples,  
$\mathit{T}_i=\{\mathit{t}_1, \mathit{t}_2, \ldots, \mathit{t}_\ell\} \subseteq  \mathit{r}$,
we employ Ridge Regression \cite{rao1973linear} to learn the parameter $\phi_i$ for the regression over $\mathit{T}_i$, 
\begin{align}\label{equation-OLS-phi} 
\phi_i &= (\boldsymbol{\mathit{X}}^\top\boldsymbol{\mathit{X}} + \alpha\boldsymbol{\mathit{E}})^{-1}
\boldsymbol{\mathit{X}}^\top\boldsymbol{\mathit{Y}} 
\end{align}
where $\alpha$ is regularization parameter, $\boldsymbol{\mathit{E}}$ is identity matrix \cite{DBLP:books/daglib/0034861}, 
$\phi_i=\{\phi_i[C], \phi_i[A_1], \ldots, \allowbreak \phi_i[A_{m-1}]\}^\top$,
\begin{align}\label{equation-OLS-Y}
\boldsymbol{\mathit{Y}} = & 
\begin{pmatrix}
\mathit{t}_{1}[A_m] \\
\mathit{t}_{2}[A_m] \\
\vdots \\
\mathit{t}_{\ell}[A_m] \\
\end{pmatrix}
, \\ \label{equation-OLS-X}
\boldsymbol{\mathit{X}} = & 
\begin{pmatrix}
 1
 & \mathit{t}_{1}[A_1]		
 & \mathit{t}_{1}[A_2]
 & \ldots 
 & \mathit{t}_{1}[A_{m-1}] \\
 1
 & \mathit{t}_{2}[A_1]
 & \mathit{t}_{2}[A_2]
 & \ldots 
 & \mathit{t}_{2}[A_{m-1}] \\
\vdots & \vdots & \vdots & \ddots & \vdots \\
 1
 & \mathit{t}_{\ell}[A_1]
 & \mathit{t}_{\ell}[A_2]
 & \ldots 
 & \mathit{t}_{\ell}[A_{m-1}]
\end{pmatrix}. 
\end{align}

$\textsf{LR}(\mathcal{F}, \mathit{A}_\mathit{m}, \mathit{T}_i)$ 
in Line \ref{line-learning-parameter} 
computes the parameter $\phi_i$ 
over $\mathit{T}_i$.
It returns $\Phi$ the set of parameters $\phi_i$ for all tuples~$\mathit{t}_i$.

\begin{example}\label{example-profiling}
Consider relation $\mathit{r}$ in 
Figure \ref{fig-example-correct}.
Let $\ell = 4$.
According to Algorithm \ref{algorithm-learning}, 
we learn the individual regression for each tuple 
together with its neighbors
in $\mathit{r}$.
For $\mathit{t}_1$, we have $\mathit{T}_1 = \textsf{NN}(\mathit{t}_1, \{A_1\}, 4) = \{\mathit{t}_1, \mathit{t}_2, \mathit{t}_3, \mathit{t}_4\}$. 
The regression is learned from $\mathit{T}_1$ with parameter $\phi_1 = \{5.56, -0.87\}^\top$.
Similar computation applies to other tuples in $\mathit{r}$, having 
\begin{align*}
\textstyle
\Phi &=
\begin{pmatrix}
\phi_1 & \phi_2 & \ldots & \phi_8
\end{pmatrix}
=
\begin{pmatrix}
5.56 & 5.56 & \ldots & -4.36 \\
-0.87 & -0.87 & \ldots & 1.11 \\
\end{pmatrix} .
\end{align*}
\end{example}

\subsubsection{Handling Single Neighbor}
\label{sect-learning-single}

As mentioned in the introduction, a small number $\ell$ will lead to overfitting.
When $\ell=1$, 
the nearest neighbor returns only one tuple, i.e., $\mathit{T}_i=\{\mathit{t}_i\}$ which has the smallest distance 0 referring to Formula \ref{equation-distance}.
In this case, it is not sufficient to learn a proper regression model.
Hence, we directly set $\phi_i[C] = \mathit{t}_i[A_m]$ 
and $\phi_i[A_1] = \phi_i[A_2] = \ldots \phi_i[A_{m-1}] = 0$.

\subsubsection{Learning Complexity}
\label{sect-learning-complexity}

Line \ref{line-learning-lNN} in Algorithm \ref{algorithm-learning} 
takes $O(mn)$ time to compute distances of all tuples to $\mathit{t}_i$, 
and $O(\ell n)$ to find the $\ell$ nearest tuples
(advanced indexing and searching techniques could be applied, which is not the focus of this study).
Referring to Formula \ref{equation-OLS-phi}, 
Line \ref{line-learning-parameter} computes 
$\phi_i$ with cost $O(m^2\ell + m^3)$.
Thereby, the time complexity of Algorithm \ref{algorithm-learning} 
is $O(m n^2+\ell n^2 + m^2\ell n + m^3n)$.

\subsection{Imputation Phase}
\label{sect-imputation}

The \emph{imputation phase} utilizes the individual regression models of $\mathit{t}_\mathit{x}$'s neighbors from $\mathit{r}$
to compute the imputation candidates. 
Intuitively, in order to enhance the reliability, rather than only one neighbor, we consider the regressions of $\mathit{k}$ imputation neighbors 
(see Section \ref{experiment-imputation-number} for an evaluation on varying the number of imputation neighbors $\mathit{k}$).
These $\mathit{k}$ imputation candidates are then aggregated as the final imputation of $\mathit{t}_\mathit{x}$.

Algorithm \ref{algorithm-imputation} presents  major steps of the imputation phase: 

(S1) Imputation neighbors. 
Line \ref{line-kNN}
finds a set $\mathit{T}_\mathit{x}$ of $\mathit{k}$ nearest neighbors of incomplete tuple $\mathit{t}_\mathit{x}$
from relation $\mathit{r}$ on complete attributes $\mathcal{F}$, 
i.e., $\mathsf{NN}(\mathit{t}_\mathit{x}, \mathcal{F}, \mathit{k})$
as imputation neighbors.

(S2) Imputation candidates.
Line \ref{algorithm-local-candidate}
computes a possible imputation $\mathit{t}_\mathit{x}^{j}[\mathit{A}_\mathit{m}]$ by using the  regression of $\mathit{t}_\mathit{x}$'s neighbor $\mathit{t}_j$ with parameter $\phi_j$.

(S3) Combination. 
Line \ref{algorithm-imputation-combine} 
aggregates the candidates suggested by the regressions of all the $\mathit{t}_\mathit{x}$'s neighbors  in $\mathit{T}_\mathit{x}$
to form the final imputation $\mathit{t}_\mathit{x}'[\mathit{A}_\mathit{m}]$.

\begin{algorithm}[t]
\caption{Imputation($\mathit{t}_\mathit{x}, \mathit{k}, \Phi$)}
\label{algorithm-imputation}
 \KwIn{$\mathit{t}_\mathit{x}$ the tuple with missing value on attribute $\mathit{A}_\mathit{m}$, $\mathit{k}$ the number of imputation neighbors, 
 $\Phi$  individual regression parameters for all tuples in $\mathit{r}$} 
 \KwOut{imputation $\mathit{t}_\mathit{x}'[\mathit{A}_\mathit{m}]$}

 $\mathit{T}_\mathit{x} \leftarrow \mathsf{NN}(\mathit{t}_\mathit{x}, \mathcal{F}, \mathit{k})$\;
 \label{line-kNN} 
 \For{each $\mathit{t}_j \in \mathit{T}_\mathit{x}$}{
  $\mathit{t}_\mathit{x}^{j}[\mathit{A}_\mathit{m}] \leftarrow \mathsf{Candidate}(\phi_j, \mathit{t}_\mathit{x}[\mathcal{F}])$\;
  \label{algorithm-local-candidate}
 }
$\mathit{t}_\mathit{x}'[\mathit{A}_\mathit{m}] \leftarrow \mathsf{Combine}(\{\mathit{t}_\mathit{x}^{j}[\mathit{A}_\mathit{m}] \mid \mathit{t}_j\in\mathit{T}_\mathit{x}\})$\;
 \label{algorithm-imputation-combine}
\Return{$\mathit{t}_\mathit{x}'[\mathit{A}_\mathit{m}]$}
\end{algorithm}

\subsubsection{Find imputation neighbors for $\mathit{t}_\mathit{x}$ on complete attributes}

This step is the same as step (1) of $\textsf{kNN}$ imputation, i.e., 
find $k$ nearest neighbors $\mathit{T}_\mathit{x} = \textsf{NN}(\mathit{t}_\mathit{x}, \mathcal{F}, \mathit{k})$.
However, such neighbors are utilized in a different way. 
While the $\textsf{kNN}$ imputation directly aggregates the values on attribute $\mathit{A}_\mathit{m}$ of neighbors, e.g., in Formula \ref{equation-kNN}, 
our proposal considers the individual regression models w.r.t.\ these neighbors.

\subsubsection{Imputation via  individual regression of each neighbor}

For each neighbor $\mathit{t}_j \in \textsf{NN}(\mathit{t}_\mathit{x}, \mathcal{F}, \mathit{k})$,
we consider the individual regression with parameter $\phi_j$ learned in the learning phase by Formula \ref{equation-OLS-phi}.

Let $\mathit{t}_\mathit{x}^j$ denote the imputation candidate suggested by the regression of the neighbor $\mathit{t}_j$. 
Referring to Formula \ref{equation-regression}, 
we have 
\begin{align}
\mathit{t}_\mathit{x}^j[\mathit{A}_\mathit{m}] &= (1,\mathit{t}_\mathit{x}[\mathcal{F}]) \phi_j + \varepsilon_j, 
\end{align} 
where $\varepsilon_j$ is the error term of the regression w.r.t.\ $\mathit{t}_j$. 
It is common to omit the error term $\varepsilon_j$ \cite{rubin2004multiple} and thus the imputation candidate of the neighbor $\mathit{t}_j$ is 
\begin{align}
\mathit{t}_\mathit{x}^j[\mathit{A}_\mathit{m}] &= (1,\mathit{t}_\mathit{x}[\mathcal{F}]) \phi_j
\end{align}

\begin{figure}[t]
\centering
\includegraphics[width=0.7\figwidths]{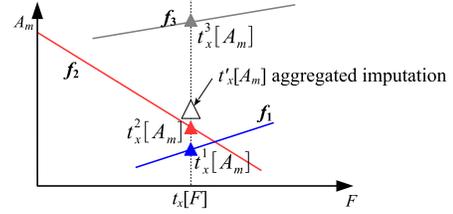}
\caption{Intuition example of combining imputation candidates}
\label{fig:example-weight}
\end{figure}

\subsubsection{Aggregating individual imputation candidates}
\label{sect-combine-imputation}

In the imputation phase, 
the tuple $\mathit{t}_\mathit{x}$ with missing values finds complete tuples $\mathit{t}_i$ as its neighbors,  
and proposes to utilize the aforesaid individually learned models of $\mathit{t}_i$. 
Again, owing to heterogeneity (the argument to learn individualized models), 
not all the neighbors $\mathit{t}_i$ may share the same models with $\mathit{t}_\mathit{x}$, 
i.e., 
not all the neighbors $\mathit{t}_i$ would provide a model leading to the correct value for imputing $\mathit{t}_\mathit{x}$. 
Arbitrarily selecting one $\mathit{t}_i$ may lead to the wrong imputation.
A neighbor $\mathit{t}_i$ with closer distance to $\mathit{t}_\mathit{x}$ on the complete attribute $\mathcal{F}$
does not denote that its model applies to $\mathit{t}_\mathit{x}$ either. 
Thereby, we propose a weighted aggregation of the imputation candidates provided by the models of different neighbors $\mathit{t}_i$, 
where the candidate values vote for each other. 

The aggregated imputation result is defined by
\begin{align}\label{equation-imputation}
\mathit{t}_\mathit{x}'[\mathit{A}_\mathit{m}] = 
\sum\limits_{\mathit{t}_j\in\mathit{T}_\mathit{x}}
\mathit{t}_\mathit{x}^j[\mathit{A}_\mathit{m}]\cdot \mathit{w}_{\mathit{x}j}, 
\end{align}
where 
$\mathit{t}_\mathit{x}^j[\mathit{A}_\mathit{m}]$ is the imputation candidate suggested by the imputation neighbor $\mathit{t}_j \in \textsf{NN}(\mathit{t}_\mathit{x},\mathcal{F}, \mathit{k})$, and
$\mathit{w}_{\mathit{x}j}$ is the weight of candidate 
$\mathit{t}_\mathit{x}^j[\mathit{A}_\mathit{m}]$ in aggregation.

Intuitively, 
we propose to let the candidate values $\mathit{t}_\mathit{x}^i[\mathit{A}_m]$ 
(provided by the models from different neighbor tuples $\mathit{t}_i$) vote for each other, 
via a weighted aggregation function. 
Similar to the idea of majority voting, 
those candidate values close with each other are more likely to be the imputation and may assign higher weights in aggregation, while outliers could be largely ignored with lower aggregation weights.
For instance, 
in Figure \ref{fig:example-weight}, 
the candidates $\mathit{t}_\mathit{x}^1[\mathit{A}_\mathit{m}]$ and $\mathit{t}_\mathit{x}^2[\mathit{A}_\mathit{m}]$ suggested by models $\mathit{f}_1$ and $\mathit{f}_2$, respectively, are close and agree with each other. 
In contrast, the other candidate $\mathit{t}_\mathit{x}^3[\mathit{A}_\mathit{m}]$ by $\mathit{f}_3$ is outlying (due to heterogeneity), 
and would be largely ignored with lower aggregation weights.

In this sense, 
we consider the distances of a candidate 
$\mathit{t}_\mathit{x}^i[\mathit{A}_\mathit{m}]$ to the other candidates, 
\begin{align}\label{equation-distance}
\mathit{c}_{\mathit{x}i}= \sum\limits_{j=1}^{k}
\left| \mathit{t}_\mathit{x}^i[\mathit{A}_\mathit{m}] - \mathit{t}_\mathit{x}^j[\mathit{A}_\mathit{m}] \right|.
\end{align}
Following the intuition that candidates close to other (i.e., having smaller $\mathit{c}_{\mathit{x}i}$) should assign larger weight, 
we define 
\begin{align}\label{equation-weight}
\textstyle
\mathit{w}_{\mathit{x}i} &= \frac{
\mathit{c}_{\mathit{x}i}^{-1}
}{
\sum\limits_{j=1}^{k}\mathit{c}_{\mathit{x}j}^{-1}
},
\end{align}
having $\sum\limits_{j=1}^{k}
\mathit{w}_{\mathit{x}j} = 1$.

\begin{example}
\label{example-imputation}
Let $\mathit{k} = 3, \ell = 4$. 
The imputation starts from the  parameter $\Phi$ 
learned in Example \ref{example-profiling}.
Algorithm \ref{algorithm-imputation} performs in three steps: 
(1) Find imputation neighbors for the incomplete tuple $\mathit{t}_\mathit{x}$, having 
$\mathit{T}_\mathit{x} = \textsf{NN}(\mathit{t}_\mathit{x}, \{A_1\}, 3) = \{\mathit{t}_5, \mathit{t}_4, \mathit{t}_6\}$
(2) Compute the imputation candidate via the individual regression of each neighbor.
For $\mathit{t}_5$, referring to the  regression model $\textsf{LR}(\{A_1\}, A_2, \mathit{T}_5)$
with parameter $\phi_5 = (-4.36, 1.11)^\top$, 
the imputation candidate is computed by $\mathit{t}_\mathit{x}^5[A_2] = (1, 5)(-4.36, 1.11)^\top = 1.19$.
Similar computation applies to neighbors $\mathit{t}_4$ and $\mathit{t}_6$, having
$\mathit{t}_\mathit{x}^4[A_2] = (1, 5)(5.56, -0.87)^\top = 1.21, \allowbreak
\mathit{t}_\mathit{x}^6[A_2] = (1, 5)(-4.36, 1.11)^\top \allowbreak = 1.19$.
(3) Aggregating the aforesaid  imputation candidates. 
Following Formula \ref{equation-distance}, we can compute the distance for each imputation candidates as $\mathit{c}_{\mathit{x}5} = \mathit{c}_{\mathit{x}6} = 0.02, \mathit{c}_{\mathit{x}4} = 0.04$.
Thus the aggregated imputation by Formula \ref{equation-imputation} is 
$
\mathit{t}_\mathit{x}'[A_2] = 1.19 * \frac{50}{125} + 1.21 * \frac{25}{125} + 1.19 * \frac{50}{125} = 1.194.
$
\end{example}

\subsubsection{Imputation Complexity}
\label{sect-imputation-complexity}

Similar to the analysis in Section \ref{sect-learning-complexity}, 
Line \ref{line-kNN} in Algorithm \ref{algorithm-imputation} searches the $\mathit{k}$ nearest neighbors with cost $O(mn+kn)$.
The imputation candidates w.r.t.\ $\mathit{k}$ imputation neighbors are then computed and combined in Lines \ref{algorithm-local-candidate} and \ref{algorithm-imputation-combine} with cost $O(\mathit{m}\mathit{k}+\mathit{k}^2)$. 
Thereby, the time complexity of Algorithm \ref{algorithm-imputation} is 
$O(mn + kn)$.

\subsection{Discussion on Overheads and Benefits}
\label{sect-discussion-framework}

Learning over individual tuples and their $\ell$ neighbors is a bit more expensive
than learning a global model over all the $\mathit{n}$ tuples. 
Referring to \cite{rao1973linear}, 
the cost of learning a regression model over $\mathit{n}$ tuples is $O(\mathit{m}^2\mathit{n} + \mathit{m}^3)$, 
while 
the cost of learning $\mathit{n}$ individual models for $\mathit{n}$ tuples given their $\ell$ neighbors is $O((\mathit{m}^2\ell + \mathit{m}^3)\mathit{n})$.
Nevertheless, both complexities are linear w.r.t.\ the number of tuples $\mathit{n}$.
In particular, all these models (global or individual) could be offline learned over complete tuples, 
and directly used in online imputing the missing values of various incomplete tuples.

The benefit of the imputation by individual models (\textsf{IIM}) would be the clearly higher accuracy than that of \textsf{GLR} with a single (inaccurate) global model, 
as shown in Table~\ref{table-all-type}. 

\section{Subsuming Existing Methods}
\label{sect-subsume}

To illustrate the rationale of the proposed \textsf{IIM} imputation, 
in this section, we theoretically prove that some existing methods (\textsf{kNN} \cite{altman1992introduction} and \textsf{GLR} \cite{little2014statistical} introduced in Sections \ref{sect-related-tuple} and \ref{sect-related-attribute}) are indeed special cases of our \textsf{IIM} under extreme settings (i.e., $\ell=1$ or $\ell=\mathit{n}$).
It further motivates us to adaptively determine a distinct number of learning neighbors $\ell$ (in between the extreme $1$ and $\mathit{n}$) for each tuple in Section \ref{sect-adaptive}.

\subsection{Subsuming kNN}

First, we show that \textsf{IIM} subsumes \textsf{kNN} 
by considering only one learning neighbor in individual learning, i.e., $\ell= 1$.

\begin{proposition}[Subsume \textsf{kNN}]\label{the:l0-equal}
When we consider 
a fixed number of learning neighbors $\ell= 1$
and a uniform weight of imputation candidate 
$w_{\mathit{x}i} = \frac{
1
}{
|\mathit{T}_{\mathit{x}}|
}$, 
the proposed \textsf{IIM} algorithm is equivalent to the \textsf{kNN} imputation.
\end{proposition}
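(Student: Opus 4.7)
The plan is to trace through both algorithms under the stated extreme setting and verify that the outputs coincide tuple-by-tuple and value-by-value. First I would observe that the imputation neighbor set $\mathit{T}_\mathit{x}$ in Algorithm \ref{algorithm-imputation} (Line \ref{line-kNN}) is computed by exactly the same routine $\mathsf{NN}(\mathit{t}_\mathit{x}, \mathcal{F}, \mathit{k})$ used in step (1) of the \textsf{kNN} imputation described in Section \ref{sect-existing-kNN}. So both algorithms work with an identical set of $\mathit{k}$ complete neighbors; no argument is needed for this step beyond pointing to the definitions.

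Next I would invoke the single-neighbor handling rule in Section \ref{sect-learning-single}. Setting $\ell=1$ forces $\mathit{T}_i=\{\mathit{t}_i\}$ for every complete tuple $\mathit{t}_i$, and by the rule the learned parameter degenerates to $\phi_i[C]=\mathit{t}_i[\mathit{A}_\mathit{m}]$ and $\phi_i[A_1]=\cdots=\phi_i[A_{m-1}]=0$. Substituting this $\phi_j$ into the candidate formula $\mathit{t}_\mathit{x}^j[\mathit{A}_\mathit{m}] = (1,\mathit{t}_\mathit{x}[\mathcal{F}])\phi_j$ collapses the inner product to the constant term, giving the clean identity $\mathit{t}_\mathit{x}^j[\mathit{A}_\mathit{m}] = \mathit{t}_j[\mathit{A}_\mathit{m}]$. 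In words, each neighbor's individual model predicts nothing other than that neighbor's own value on the incomplete attribute.

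Finally I would plug these candidates together with the stipulated uniform weight $\mathit{w}_{\mathit{x}j} = 1/|\mathit{T}_\mathit{x}| = 1/\mathit{k}$ into the aggregation Formula \ref{equation-imputation}, obtaining
\begin{equation*}
\mathit{t}_\mathit{x}'[\mathit{A}_\mathit{m}] = \sum_{\mathit{t}_j\in\mathit{T}_\mathit{x}} \mathit{t}_j[\mathit{A}_\mathit{m}] \cdot \frac{1}{\mathit{k}} = \frac{\sum_{\mathit{t}_j\in\mathit{T}_\mathit{x}} \mathit{t}_j[\mathit{A}_\mathit{m}]}{\mathit{k}},
\end{equation*}
which is precisely the \textsf{kNN} aggregation of Formula \ref{equation-kNN}.

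There is really no hard step here; the proposition is essentially a definition-chasing exercise made true by the boundary convention of Section \ref{sect-learning-single}. The only subtlety worth a sentence is distinguishing the two roles of neighbor count: $\ell=1$ controls learning (and triggers the degenerate parameter assignment), while $\mathit{k}$ still controls the size of $\mathit{T}_\mathit{x}$, so the uniform weight normalizes over $\mathit{k}$ rather than $\ell$. Ensuring that the reader does not conflate $\ell$ and $\mathit{k}$ is the main expository (rather than technical) obstacle.
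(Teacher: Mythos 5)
Your proof is correct and follows essentially the same route as the paper's: invoke the single-neighbor convention of Section \ref{sect-learning-single} to get $\phi_i[C]=\mathit{t}_i[\mathit{A}_\mathit{m}]$ with zero slope coefficients, conclude that each candidate collapses to the neighbor's own value, and then observe that the uniform-weight aggregation reproduces Formula \ref{equation-kNN}. Your closing remark distinguishing the roles of $\ell$ and $\mathit{k}$ is a useful clarification but does not change the argument.
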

\begin{proof}
When $\ell = 1$, 
for any $\mathit{t}_i\in\mathit{r}$, 
its neighbor is $\mathit{T}_i = \textsf{NN}(\mathit{t}_i, \mathcal{F}, \ell) = \{\mathit{t}_i\}$, 
i.e., itself for learning the individual model.
It leads to the case of single neighbor in learning, described in Section \ref{sect-learning-single}.
That is, we have
$
\phi_i[C] = \mathit{t}_i[A_m] ,
\phi_i[A_j] = 0, 1 \leq j \leq m-1
$.

Consider the $\mathit{k}$ imputation neighbors of incomplete tuple $\mathit{t}_\mathit{x}$, i.e.,  
$\mathit{T}_\mathit{x} 
= \textsf{NN}(\mathit{t}_\mathit{x}, \mathcal{F}, \mathit{k}) \subseteq \mathit{r}$.
For each $\mathit{t}_i \in \mathit{T}_\mathit{x}$, 
referring to the individual regression with parameter $\phi_i$, 
we have $\mathit{t}_\mathit{x}^i[A_m] = \mathit{t}_i[A_m]$.
Referring to the uniform weight of imputation candidates
$w_{\mathit{x}i} = \frac{
1
}{
|\mathit{T}_{\mathit{x}}|
}$,
the imputation obtained by Formula \ref{equation-imputation} thus has 
$
\mathit{t}_\mathit{x}'[A_m] = \frac{
\sum\limits_{\mathit{t}_i\in\mathit{T}_\mathit{x}}\mathit{t}_i[A_m]
}{
|\mathit{T}_\mathit{x}|
},
$
which is exactly the same as the \textsf{kNN} imputation in Formula \ref{equation-kNN}.
\end{proof}

\subsection{Subsuming GLR}

Moreover, we prove that \textsf{IIM} subsumes \textsf{GLR} 
by considering all the tuples in $\mathit{r}$ as the learning neighbors in individual learning, i.e., $\ell= \mathit{n} = |\mathit{r}|$.

\begin{proposition}[Subsume \textsf{GLR}]\label{the-LR-equal}
When we consider a fixed
number of learning neighbors $\ell= \mathit{n} = |\mathit{r}|$,
the \textsf{IIM} algorithm is equivalent to the \textsf{GLR} imputation.
\end{proposition}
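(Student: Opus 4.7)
The plan is to show that when $\ell = n = |\mathit{r}|$, the individual learning phase degenerates: every tuple sees the same learning set, hence produces the same regression parameter, and this parameter coincides with the global \textsf{GLR} parameter $\phi$. Once all individual parameters collapse to $\phi$, all imputation candidates are identical, and the aggregation in Formula \ref{equation-imputation} trivially returns the \textsf{GLR} prediction in Formula \ref{equation-imputation-global}.

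First I would observe that for each $\mathit{t}_i \in \mathit{r}$, the set $\mathit{T}_i = \mathsf{NN}(\mathit{t}_i, \mathcal{F}, \ell)$ with $\ell = n$ must contain all $n$ tuples (the $n$ nearest neighbors of $\mathit{t}_i$ in a set of $n$ tuples are the set itself), so $\mathit{T}_i = \mathit{r}$. Applying Formula \ref{equation-OLS-phi} with the matrices $\boldsymbol{X}, \boldsymbol{Y}$ built from the full relation $\mathit{r}$ gives $\phi_i = \phi$ for every $\mathit{t}_i \in \mathit{r}$, where $\phi$ is exactly the global regression parameter defined in Section \ref{sect-existing-reg}.

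Next, for the imputation phase, fix any set $\mathit{T}_\mathit{x} = \mathsf{NN}(\mathit{t}_\mathit{x}, \mathcal{F}, \mathit{k})$ of imputation neighbors. For every $\mathit{t}_j \in \mathit{T}_\mathit{x}$ the candidate becomes $\mathit{t}_\mathit{x}^j[\mathit{A}_\mathit{m}] = (1, \mathit{t}_\mathit{x}[\mathcal{F}])\phi_j = (1, \mathit{t}_\mathit{x}[\mathcal{F}])\phi$, which is a single value $v$ independent of $j$. Since the aggregation weights in Formula \ref{equation-weight} satisfy $\sum_{j} \mathit{w}_{\mathit{x}j} = 1$, Formula \ref{equation-imputation} yields $\mathit{t}_\mathit{x}'[\mathit{A}_\mathit{m}] = v \sum_{j} \mathit{w}_{\mathit{x}j} = v = (1,\mathit{t}_\mathit{x}[\mathcal{F}])\phi$, matching the \textsf{GLR} imputation of Formula \ref{equation-imputation-global}.

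The only subtlety, and the step I expect to require the most care, is that when all candidates coincide the distances $\mathit{c}_{\mathit{x}i}$ in Formula \ref{equation-distance} are all zero and the weight definition in Formula \ref{equation-weight} is formally $0/0$. I would handle this either by treating agreement among all candidates as a degenerate case whose aggregate is (by convention or by a limiting argument $\mathit{c}_{\mathit{x}i} \to 0^+$ with ties broken symmetrically) the common value, or more cleanly by noting that any convex combination of identical values equals that value, so the aggregation step is well defined regardless of the weight scheme. With that handled, the equivalence with \textsf{GLR} follows.
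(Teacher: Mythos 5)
Your proof is correct and follows essentially the same route as the paper's: with $\ell=n$ every learning set $\mathit{T}_i$ is all of $\mathit{r}$, so every $\phi_i$ equals the global parameter $\phi_\mathit{r}$, every candidate equals $(1,\mathit{t}_\mathit{x}[\mathcal{F}])\phi_\mathit{r}$, and the weighted aggregation with weights summing to $1$ returns that common value. Your extra attention to the $0/0$ degeneracy in Formula \ref{equation-weight} when all candidates coincide is a detail the paper's proof silently skips, and your resolution (any convex combination of identical values is that value) is the right one.
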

\begin{proof}
When $\ell = \mathit{n}$, 
for any $\mathit{t}_i\in\mathit{r}$, 
its learning neighbors are $\mathit{T}_i = \textsf{NN}(\mathit{t}_i, \mathcal{F}, \ell) = \mathit{r}$, 
i.e., all the complete tuples.
Let 
$\phi_\mathit{r}$ be the parameter of the global regression learned from the entire $\mathit{r}$.
We have 
$
\phi_i = \phi_\mathit{r}.
$

Consider the $\mathit{k}$ imputation neighbors of incomplete tuple $\mathit{t}_\mathit{x}$, i.e.,  
$\mathit{T}_\mathit{x} 
= \textsf{NN}(\mathit{t}_\mathit{x}, \mathcal{F}, \mathit{k}) \subseteq \mathit{r}$.
For each $\mathit{t}_i \in \mathit{T}_\mathit{x}$, 
referring to the individual regression with parameter $\phi_i=\phi_\mathit{r}$, 
we have $\mathit{t}_\mathit{x}^i[A_m] = (1, \mathit{t}_\mathit{x}[\mathcal{F}])\phi_\mathit{r}$.
The imputation obtained by Formula \ref{equation-imputation} thus has 
$
\mathit{t}_\mathit{x}'[A_m] = 
(1, \mathit{t}_\mathit{x}[\mathcal{F}])\phi_\mathit{r},
$
which is exactly the same as the \textsf{GLR} imputation in Formula \ref{equation-imputation-global}.
\end{proof}

\section{Adaptive Learning}
\label{sect-adaptive}

In the learning phase in Section \ref{sect-learning}, 
a fixed number $\ell$ of learning neighbors is considered for all the tuples in $\mathit{r}$ in Algorithm \ref{algorithm-learning}. 
There are two issues to concern:
(1) how to determine a proper number $\ell$ of neighbors for learning; and 
(2) different tuples may prefer a distinct number $\ell$ of learning neighbors,  
owing to heterogeneity. 

In Section \ref{sect-adaptive-learning-algorithm}, 
we consider the various candidate regression models learned under different $\ell$ for a tuple. 
The adaptive learning (Algorithm \ref{algorithm-adaptive})
selects a proper $\ell$ as well as the corresponding model for each tuple.
Intuitively, to evaluate whether a model learned under some $\ell$ is proper, 
we may consider a set of complete tuples as validation data, 
and see which learned models can best impute the validation tuples (truth is known in the complete validation tuple).

In Section \ref{sect-incremental-computation},
to 
efficiently
learn the candidate regression models under various $\ell$ for a tuple, 
we devise an incremental computing scheme. 
Remarkably, it reduces the time complexity of individual learning from linear to constant.

\subsection{Adaptive Learning with Validation}
\label{sect-adaptive-learning-algorithm}

Algorithm \ref{algorithm-adaptive} presents the procedure of adaptively learning a proper regression model from $\mathcal{F}$ to $\mathit{A}_\mathit{m}$ for each complete tuple $\mathit{t}_i\in\mathit{r}$ under various number $\ell$ of learning neighbors. 

First, Line \ref{line-call-learn} learns the candidate models under various $\ell$ for all tuples in $\mathit{r}$, denoted by $\Phi^{(\ell)}$, 
by call the Learning Algorithm \ref{algorithm-learning}.
(Advanced incremental computation is devised among different $\ell$ in Section \ref{sect-incremental-computation}.)

We consider the complete tuples in $\mathit{r}$ as the validation set. 
For each $\mathit{t}_j\in\mathit{r}$ employed as a validation tuple, 
we assume its $\mathit{t}_j[\mathit{A}_\mathit{m}]$ is missing. 
The original complete value $\mathit{v}$ of $\mathit{t}_j[\mathit{A}_\mathit{m}]$ is directly used to evaluate 
how the models from $\mathit{t}_i$ (neighbor of $\mathit{t}_j$)  
could accurately impute~$\mathit{t}_j[\mathit{A}_\mathit{m}]$.

It is worth noting that 
the model of tuple $\mathit{t}_i$ learned over a number of  $\ell$ learning neighbors
can be applied multiple times to impute various $\mathit{t}_j$.
The $cost[i][\ell]$ in Line \ref{line-adaptive-cost} in Algorithm~\ref{algorithm-adaptive}
denotes the total difference between the truths and the imputations given different validation tuples $\mathit{t}_j$.
A model with smaller $cost[i][\ell]$ means more accurate imputation when applied,
and thus is preferred in Line \ref{line-min-imputation-cost}.
This extra overhead is necessary, since we want to select a proper $\ell$ that performs well in general for imputing potentially all the nearby tuples $\mathit{t}_j$.

\begin{algorithm}[t]
\caption{Adaptive($\mathit{r}, \mathcal{F}$, $\mathit{A}_\mathit{m}$)}
\label{algorithm-adaptive}
 \KwIn{relation $\mathit{r}$ of complete tuples, complete attributes $\mathcal{F}$, incomplete attribute $\mathit{A}_\mathit{m}$}
 \KwOut{$\Phi$ the set of regression parameters $\phi_i$ learned for all tuples $\mathit{t}_i$ in $\mathit{r}$}
 \For{$\ell \leftarrow 1$ \KwTo $n$}{ \label{line-adaptive-ell}
   $\Phi^{(\ell)} \leftarrow \textrm{Learning}(\mathit{r}, \ell, \mathcal{F}, \mathit{A}_\mathit{m})$\; \label{line-call-learn}
 }
 \For{each $\mathit{t}_j \in \mathit{r}$}{ \label{line-adaptive-tuple}
   $\mathit{T}_j \leftarrow \textsf{NN}(\mathit{t}_j, \mathcal{F}, \mathit{k})$\; \label{line-adaptive-lNN} 
   \For{each $\mathit{t}_i \in \mathit{T}_j $}{
     \For{$\ell \leftarrow 1$ \KwTo $n$}{ \label{line-all-ell}
     $cost[i][\ell] += 
     \left(
       \mathit{t}_{j}[\mathit{A}_\mathit{m}]-
       (1, \mathit{t}_{j}[\mathcal{F}])\phi_i^{(\ell)}
       \right)^2$\; \label{line-adaptive-cost}
     } 
   } 
 } 
 \For{$i \leftarrow 1$ \KwTo $n$}{
  $\ell_i^* \leftarrow \argmin_{\ell\in[1,\mathit{n}]}{cost[i][\ell]}$\; \label{line-min-imputation-cost}
  $\phi_i \leftarrow \phi_i^{(\ell_i^*)}$\;
 }
\Return{$\Phi$ }
\end{algorithm}

\begin{example}
\label{example-determination-straightforward}
Consider relation $\mathit{r}$ in Figure \ref{fig-example-correct}.
Suppose that we have learned candidate models under various $\ell$ for all the tuples in Line \ref{line-call-learn} 
in Algorithm \ref{algorithm-adaptive}.
Given $\mathit{k} = 3$, 
we determine a proper model for each tuple
from the candidate models $\Phi^{(\ell)}$.

Let $\mathit{t}_1$ be the validation tuple.  
Line \ref{line-adaptive-lNN} finds \textsf{kNN} of $\mathit{t}_1$ on the complete attribute $\mathit{A}_1$, i.e., $\mathit{T}_1 = \{\mathit{t}_2, \mathit{t}_3, \mathit{t}_4\}$.
For each tuple in $\mathit{T}_1$, say $\mathit{t}_2$, 
the difference between imputation by various candidate models of $\mathit{t}_2$ and the truth of $\mathit{t}_1[\mathit{A}_2]$ are recorded, 
\begin{align*}
\textstyle
cost[2][1] &= (5.8 - (1, 0)(4.35, 0)^\top)^2 = 2.1, \\
cost[2][2] &= (5.8 - (1, 0)(5.79, -1.49)^\top)^2 = 0.0001, \\
 \dots \\
cost[2][8] &= (5.8 - (1, 0)(4.41, -0.01)^\top)^2 = 1.93.
\end{align*}
Line \ref{line-adaptive-cost} aggregates such difference costs on all the tuples in $\mathit{r}$ (as validation set) in addition to the aforesaid $\mathit{t}_1$.
We have 
$\{cost[2][1], cost[2][2], \dots, cost[2][8]\} = $
\{3.73, 3.67, 0.31, \textbf{0.09}, 1.47, 2.36, 3.03, 3.65\}.
Finally, $\ell_2^* = 4$ with the minimal $cost[2][4]$ is selected and $\phi_2 = \phi_2^{(4)} = \{5.56, -0.87\}^\top$ is returned
as the parameter of the model for $\mathit{t}_2$.
\end{example}

\subsubsection{Adaptive Learning Complexity}
\label{sect-determination-complexity}

We can precompute once the nearest neighbors for all tuples in $\mathit{r}$ with cost $O((m+n)n^2) = O(n^3)$ and directly use them in learning individual model for a certain $\ell$.
According to Algorithm \ref{algorithm-learning}, 
the learning phase computes $\phi_i$ with cost $O(m^2\ell + m^3)$ for a certain $\ell$ and cost  
$O(m^2n^2)$ for all possible $\ell$ from $1$ to $n$.
For each tuple $\mathit{t}_i$, 
the cost for computing difference is $O(kn)$.
Thus the time cost from Line \ref{line-adaptive-tuple} to Line \ref{line-adaptive-cost} is $O(kn^2)$.
Obviously, it costs $O(n^2)$ to find the proper $\ell^*$ for all the tuples.
Finally, the time complexity of Algorithm \ref{algorithm-adaptive} is 
$O(m^2n^2 + n^3)$.

\subsubsection{Approximation via Stepping}
\label{sect-stepping}

When considering various $\ell$ in Line \ref{line-adaptive-ell} in Algorithm \ref{algorithm-adaptive}, 
instead of increasing 1 in each iteration, i.e., $\ell=\ell+1$, 
we may increase more, say $\ell=\ell+\mathit{h}, \mathit{h}\geq1$ in stepping.
The time cost by stepping significantly reduces, 
from $O(\mathit{m}^2\mathit{n}^3)$ to $O(\mathit{m}^2\mathit{n}^3/\mathit{h})$.
However, it may miss a better model in between $\ell$ and $\ell+\mathit{h}$. 
Therefore, stepping is a tradeoff between efficiency and accuracy.
(See Section \ref{sect-experiment-stepping} for results under various  stepping $\mathit{h}$.)

\begin{example}
For stepping $\mathit{h} = 3$, only the $\ell$ values $\{1, 4, 7\}$ will be considered, instead of all 8 possible $\ell$.
Similar to Example \ref{example-determination-straightforward}, 
for tuple $\mathit{t}_2$, it computes $cost[2][1] = 3.73, cost[2][4]=0.09, cost[2][7]=3.03$.
Finally, $\ell_2^*=4$ is selected and $\Phi_2 = \{5.56, -0.87\}^\top$ is returned. 
\end{example}

\subsection{Incremental Computation}
\label{sect-incremental-computation}

For a specific $\ell$, Line \ref{line-call-learn} in Algorithm \ref{algorithm-adaptive} 
calls the individual Learning Algorithm \ref{algorithm-learning} 
starting from scratch, 
without utilizing any results from the previous learning,
e.g., $\ell-1$.
It is worth noting that the $\ell-1$ learning neighbors of a tuple are always subsumed in the corresponding $\ell$ neighbors (Formula \ref{equation-neighbor-subsume}). 
Intuitively, the learning computation on $\ell-1$ neighbors has no need to repeat in the learning over $\ell$ neighbors.

\subsubsection{Incremental Learning}

Let 
$\mathit{T}_i^{(\ell)}=\textsf{NN}(\mathit{t}_i, \mathcal{F}, \ell)
=\{\mathit{t}_1,\dots,\mathit{t}_\ell\}$ 
denote the set of $\ell$ nearest neighbors of $\mathit{t}_i\in\mathit{r}$,
and 
$\phi_i^{(\ell)}$ be the parameter of the individual regression learned from $\mathit{T}_i^{(\ell)}$
by Formula \ref{equation-OLS-phi}.
As aforesaid, 
subsumption relationship exists among the sets of nearest neighbors with different sizes $\ell$. 
That is, for any tuple $\mathit{t}_i\in\mathit{r}$, $\mathit{h}\geq1$, we have 
\begin{align}\label{equation-neighbor-subsume}
\mathit{T}_i^{(\ell)}=\textsf{NN}(\mathit{t}_i, \mathcal{F}, \ell) \subset
\mathit{T}_i^{(\ell+\mathit{h})} = \textsf{NN}(\mathit{t}_i, \mathcal{F}, \ell+\mathit{h}).
\end{align}
Intuitively, 
the regression model, e.g., $\phi_i^{(\ell+\mathit{h})}$ learned over $\mathit{T}_i^{(\ell+\mathit{h})}$, 
can be incrementally computed from the previous results, i.e., $\phi_i^{(\ell)}$ learned over $\mathit{T}_i^{(\ell)}$, 
in Proposition \ref{the-incremental-bottom-up},
rather than starting from scratch
in Algorithm \ref{algorithm-learning}.
Remarkably, we show in Table \ref{table-increment-time} that 
the incremental computation reduces the learning complexity from linear to constant (in terms of the number $\ell$).

Let 
$\mathit{T}^{(\ell+\mathit{h})}_i = \mathsf{NN}(\mathit{t}_i, \mathcal{F}, \ell+\mathit{h})=\{\mathit{t}_1,\dots,\mathit{t}_\ell,  \mathit{t}_{\ell+1}, \dots, \allowbreak \mathit{t}_{\ell+\mathit{h}}\}$, 
having 
\begin{align}\label{equation-neighbor-increment}
\mathit{T}_i^{(\ell+\mathit{h})}\setminus\mathit{T}_i^{(\ell)}=\{\mathit{t}_{\ell+1},\dots,\mathit{t}_{\ell+\mathit{h}}\}.
\end{align}

To represent the increment, 
we rewrite $\boldsymbol{\mathit{Y}}^{(\ell+\mathit{h})}$ in Formula \ref{equation-OLS-Y} 
and $\boldsymbol{\mathit{X}}^{(\ell+\mathit{h})}$ in Formula \ref{equation-OLS-X}
as follows,
\begin{align}\label{equation-OLS-incerment-Y} 
\boldsymbol{\mathit{Y}}^{(\ell+\mathit{h})} = & 
\begin{pmatrix}
\boldsymbol{\mathit{Y}}^{(\ell)} \\
\mathit{t}_{\ell+1}[A_m] \\
\vdots \\
\mathit{t}_{\ell+\mathit{h}}[A_m] \\
\end{pmatrix}
=
\begin{pmatrix}
\boldsymbol{\mathit{Y}}^{(\ell)} \\
\boldsymbol{\mathit{Y}}^{(\ell, \Delta_{\mathit{h}})} \\
\end{pmatrix}, \\
\label{equation-OLS-incerment-X} 
\boldsymbol{\mathit{X}}^{(\ell+\mathit{h})} = & 
\begin{pmatrix}
& &
\boldsymbol{\mathit{X}}^{(\ell)} \\
 1
 & \mathit{t}_{\ell+1}[A_1]
 & \ldots 
 & \mathit{t}_{\ell+1}[A_{m-1}] \\
 1
 & \mathit{t}_{\ell+2}[A_1]
 & \ldots 
 & \mathit{t}_{\ell+2}[A_{m-1}] \\
\vdots & \vdots & \ddots & \vdots \\
 1
 & \mathit{t}_{\ell+\mathit{h}}[A_1]
 & \ldots 
 & \mathit{t}_{\ell+\mathit{h}}[A_{m-1}]
\end{pmatrix}
=
\begin{pmatrix}
\boldsymbol{\mathit{X}}^{(\ell)} \\
\boldsymbol{\mathit{X}}^{(\ell, \Delta_{\mathit{h}})}
\end{pmatrix},
\end{align}
where 
$\boldsymbol{\mathit{Y}}^{(\ell+\mathit{h})}$ is an $(\mathit{\ell+\mathit{h}})\times1$ matrix, 
and 
$\boldsymbol{\mathit{X}}^{(\ell+\mathit{h})}$ is an $(\mathit{\ell+\mathit{h}})\times\mathit{m}$ matrix.

To incrementally compute $\phi_i^{(\ell+\mathit{h})}$ by Formula \ref{equation-OLS-phi}, 
we  define
\begin{align}
\label{equation-mid-matrix-U}
\boldsymbol{\mathit{U}}^{(\ell+\mathit{h})} &= 
(\boldsymbol{\mathit{X}}^{(\ell+\mathit{h})})^\top
\boldsymbol{\mathit{X}}^{(\ell+\mathit{h})} ,
\\ 
\label{equation-mid-matrix-V}
\boldsymbol{\mathit{V}}^{(\ell+\mathit{h})} &= 
(\boldsymbol{\mathit{X}}^{(\ell+\mathit{h})})^\top
\boldsymbol{\mathit{Y}}^{(\ell+\mathit{h})} ,
\end{align}
where 
$\boldsymbol{\mathit{U}}^{(\ell+\mathit{h})}$ is an $\mathit{m}\times\mathit{m}$ matrix, 
and 
$\boldsymbol{\mathit{V}}^{(\ell+\mathit{h})}$ is an $\mathit{m}\times1$ matrix
with both sizes independent of $\ell$ and $\mathit{h}$.

Formula \ref{equation-OLS-phi} for learning the parameter can be rewritten by
\begin{align}
\label{equation-mid-matrix-phi}
\phi_i^{(\ell+\mathit{h})} &= (\boldsymbol{\mathit{U}}^{(\ell+\mathit{h})} + \alpha\boldsymbol{\mathit{E}})^{-1}
\boldsymbol{\mathit{V}}^{(\ell+\mathit{h})}.
\end{align}

We show in the proposition below that 
$\boldsymbol{\mathit{U}}^{(\ell+\mathit{h})}$ and $\boldsymbol{\mathit{V}}^{(\ell+\mathit{h})}$ 
can be incrementally computed from 
$\boldsymbol{\mathit{U}}^{(\ell)}$ and $\boldsymbol{\mathit{V}}^{(\ell)}$,
together with 
$\boldsymbol{\mathit{Y}}^{(\ell, \Delta_{\mathit{h}})}$
and 
$\boldsymbol{\mathit{X}}^{(\ell, \Delta_{\mathit{h}})}$
defined in Formulas \ref{equation-OLS-incerment-Y} and \ref{equation-OLS-incerment-X}.

\begin{proposition}\label{the-incremental-bottom-up}
$\boldsymbol{\mathit{U}}^{(\ell+\mathit{h})}, \boldsymbol{\mathit{V}}^{(\ell+\mathit{h})}$ 
could be incrementally computed from 
$\boldsymbol{\mathit{U}}^{(\ell)}, \boldsymbol{\mathit{V}}^{(\ell)}$, 
having
\begin{align}
\label{equation-estimate-recursive-u}
\boldsymbol{\mathit{U}}^{(\ell+\mathit{h})} &= 
\boldsymbol{\mathit{U}}^{(\ell)} + 
(\boldsymbol{\mathit{X}}^{(\ell, \Delta_{\mathit{h}})})^\top
\boldsymbol{\mathit{X}}^{(\ell, \Delta_{\mathit{h}})} \\
\label{equation-estimate-recursive-v}
\boldsymbol{\mathit{V}}^{(\ell+\mathit{h})} &= 
\boldsymbol{\mathit{V}}^{(\ell)} + 
(\boldsymbol{\mathit{X}}^{(\ell, \Delta_{\mathit{h}})})^\top
\boldsymbol{\mathit{Y}}^{(\ell, \Delta_{\mathit{h}})}
\end{align}
where $\ell\in[1,n)$ and $h \in [1, n-\ell]$. 
\end{proposition}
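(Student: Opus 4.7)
The plan is to exploit the block-matrix structure already written down in Formulas \ref{equation-OLS-incerment-Y} and \ref{equation-OLS-incerment-X}. Because the subsumption $\mathit{T}_i^{(\ell)} \subset \mathit{T}_i^{(\ell+\mathit{h})}$ from Formula \ref{equation-neighbor-subsume} holds, the $\mathit{h}$ newly added learning neighbors in $\mathit{T}_i^{(\ell+\mathit{h})} \setminus \mathit{T}_i^{(\ell)}$ (Formula \ref{equation-neighbor-increment}) correspond exactly to appending $\mathit{h}$ rows at the bottom of $\boldsymbol{\mathit{X}}^{(\ell)}$ and $\boldsymbol{\mathit{Y}}^{(\ell)}$, so the stacked representation is well defined. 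Once this is in hand, the proposition reduces to a mechanical application of block-matrix identities.

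First I would transpose the vertically stacked block using $\bigl(\begin{smallmatrix} A \\ B \end{smallmatrix}\bigr)^\top = (A^\top, B^\top)$; applied to $\boldsymbol{\mathit{X}}^{(\ell+\mathit{h})}$ this yields $(\boldsymbol{\mathit{X}}^{(\ell+\mathit{h})})^\top = \bigl((\boldsymbol{\mathit{X}}^{(\ell)})^\top, (\boldsymbol{\mathit{X}}^{(\ell, \Delta_{\mathit{h}})})^\top\bigr)$. Substituting this into the definition of $\boldsymbol{\mathit{U}}^{(\ell+\mathit{h})}$ in Formula \ref{equation-mid-matrix-U} and multiplying a horizontal concatenation by the corresponding vertical one collapses the product into the sum $(\boldsymbol{\mathit{X}}^{(\ell)})^\top \boldsymbol{\mathit{X}}^{(\ell)} + (\boldsymbol{\mathit{X}}^{(\ell, \Delta_{\mathit{h}})})^\top \boldsymbol{\mathit{X}}^{(\ell, \Delta_{\mathit{h}})}$; the first summand is precisely $\boldsymbol{\mathit{U}}^{(\ell)}$, delivering Formula \ref{equation-estimate-recursive-u}. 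The same argument with $\boldsymbol{\mathit{Y}}^{(\ell+\mathit{h})}$ in place of the right-hand factor, using the stacking in Formula \ref{equation-OLS-incerment-Y}, gives Formula \ref{equation-estimate-recursive-v}.

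There is no real obstacle here: the entire claim is block-matrix bookkeeping, and the only substantive ingredient is the neighbor-set subsumption that guarantees the row order can be chosen so that the old block $\boldsymbol{\mathit{X}}^{(\ell)}$ sits on top and the new rows $\boldsymbol{\mathit{X}}^{(\ell,\Delta_{\mathit{h}})}$ below. I would therefore keep the proof short, perhaps a few lines of display math, and conclude with a brief remark that, since $\boldsymbol{\mathit{U}}$ and $\boldsymbol{\mathit{V}}$ have fixed sizes $\mathit{m} \times \mathit{m}$ and $\mathit{m} \times 1$ independent of $\ell$ (as noted after Formulas \ref{equation-mid-matrix-U}--\ref{equation-mid-matrix-V}), each incremental update costs only $O(\mathit{m}^2 \mathit{h})$. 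This is precisely what enables the claimed reduction of the per-model learning cost from linear to constant in $\ell$ when $\mathit{h}$ is a small constant (e.g., $\mathit{h}=1$), and justifies the complexity entries in Table \ref{table-increment-time} referenced just before the proposition.
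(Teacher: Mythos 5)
Your proposal is correct and follows essentially the same route as the paper's own proof: both expand $(\boldsymbol{\mathit{X}}^{(\ell+\mathit{h})})^\top\boldsymbol{\mathit{X}}^{(\ell+\mathit{h})}$ (and the analogous product with $\boldsymbol{\mathit{Y}}^{(\ell+\mathit{h})}$) using the block decomposition of Formulas \ref{equation-OLS-incerment-Y} and \ref{equation-OLS-incerment-X}, so the cross-product collapses into the old term plus the increment. The closing remark on the $O(\mathit{m}^2\mathit{h})$ update cost is a sound addition consistent with Table \ref{table-increment-time}, though the paper defers that discussion to the complexity analysis rather than the proof itself.
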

\begin{proof}
We show the correctness of Formulas \ref{equation-estimate-recursive-u} and \ref{equation-estimate-recursive-v}, respectively, as follows.

(1) For $\boldsymbol{\mathit{U}}^{(\ell+\mathit{h})}$, we have
\begin{align*}
\boldsymbol{\mathit{U}}^{(\ell+\mathit{h})} &= 
(\boldsymbol{\mathit{X}}^{(\ell+\mathit{h})})^\top
\boldsymbol{\mathit{X}}^{(\ell+\mathit{h})} \\
&= 
\begin{pmatrix}
(\boldsymbol{\mathit{X}}^{(\ell)})^\top & 
(\boldsymbol{\mathit{X}}^{(\ell, \Delta_{\mathit{h}})})^\top
\end{pmatrix}
\begin{pmatrix}
\boldsymbol{\mathit{X}}^{(\ell)} \\
\boldsymbol{\mathit{X}}^{(\ell, \Delta_{\mathit{h}})}
\end{pmatrix}
\\
&=
(\boldsymbol{\mathit{X}}^{(\ell)})^\top
\boldsymbol{\mathit{X}}^{(\ell)} +
(\boldsymbol{\mathit{X}}^{(\ell, \Delta_{\mathit{h}})})^\top
\boldsymbol{\mathit{X}}^{(\ell, \Delta_{\mathit{h}})} \\
&=
\boldsymbol{\mathit{U}}^{(\ell)} + 
(\boldsymbol{\mathit{X}}^{(\ell, \Delta_{\mathit{h}})})^\top
\boldsymbol{\mathit{X}}^{(\ell, \Delta_{\mathit{h}})}.
\end{align*}

(2) For $\boldsymbol{\mathit{V}}^{(\ell+\mathit{h})}$, we have
\begin{align*}
\boldsymbol{\mathit{V}}^{(\ell+\mathit{h})} &= 
(\boldsymbol{\mathit{X}}^{(\ell+\mathit{h})})^\top
\boldsymbol{\mathit{Y}}^{(\ell+\mathit{h})} \\
&= 
\begin{pmatrix}
(\boldsymbol{\mathit{X}}^{(\ell)})^\top & 
(\boldsymbol{\mathit{X}}^{(\ell, \Delta_{\mathit{h}})})^\top
\end{pmatrix}
\begin{pmatrix}
\boldsymbol{\mathit{Y}}^{(\ell)} \\
\boldsymbol{\mathit{Y}}^{(\ell, \Delta_{\mathit{h}})}
\end{pmatrix}
\\
&=
(\boldsymbol{\mathit{X}}^{(\ell)})^\top
\boldsymbol{\mathit{Y}}^{(\ell)} +
(\boldsymbol{\mathit{X}}^{(\ell, \Delta_{\mathit{h}})})^\top
\boldsymbol{\mathit{Y}}^{(\ell, \Delta_{\mathit{h}})} \\
&=
\boldsymbol{\mathit{V}}^{(\ell)} + 
(\boldsymbol{\mathit{X}}^{(\ell, \Delta_{\mathit{h}})})^\top
\boldsymbol{\mathit{Y}}^{(\ell, \Delta_{\mathit{h}})}.
\end{align*}
\end{proof}

\begin{example}\label{example-bottom-up}
Suppose that learning on $\mathit{t}_1$ with $\ell=3$ has been performed, having
$\textsf{NN}(\mathit{t}_1, \{A_1\}, 3) = \{\mathit{t}_1, \mathit{t}_2, \mathit{t}_3\}$,
\begin{align*}
\textstyle
\boldsymbol{\mathit{U}}^{(3)} &= 
(\boldsymbol{\mathit{X}}^{(3)})^\top
\boldsymbol{\mathit{X}}^{(3)}
= 
\begin{pmatrix}
1 & 1 & 1 \\
0 & 0.8 & 1.9
\end{pmatrix}
\begin{pmatrix}
1 & 0  \\
1 & 0.8  \\
1 & 1.9
\end{pmatrix}
, \\
\boldsymbol{\mathit{V}}^{(3)} &= 
(\boldsymbol{\mathit{X}}^{(3)})^\top
\boldsymbol{\mathit{Y}}^{(3)}
= 
\begin{pmatrix}
1 & 1 & 1 \\
0 & 0.8 & 1.9
\end{pmatrix}
\begin{pmatrix}
5.8 \\
4.6 \\
3.8
\end{pmatrix}
, \\
\phi_1^{(3)} &= 
(\boldsymbol{\mathit{U}}^{(3)} + \alpha\boldsymbol{\mathit{E}})^{-1}
\boldsymbol{\mathit{V}}^{(3)}
=
\begin{pmatrix}
5.66 \\
-1.03
\end{pmatrix}.
\end{align*}

Now we want to learn the parameter $\phi_1^{(4)}$ of $\mathit{t}_1$ for $\ell = 4$, 
having $\textsf{NN}(\mathit{t}_1, \{A_1\}, 4) = \textsf{NN}(\mathit{t}_1, \{A_1\}, 3) \cup \{\mathit{t}_4\}$.
Instead of recomputing entirely the matrices $\boldsymbol{\mathit{U}}^{(4)}, \boldsymbol{\mathit{V}}^{(4)}$, 
they can be incrementally computed from $\boldsymbol{\mathit{U}}^{(3)},
\boldsymbol{\mathit{V}}^{(3)}$.
Specifically, given $\boldsymbol{\mathit{X}}^{(3, 1)} = 
\begin{pmatrix}
1 & 2.9
\end{pmatrix}$ and 
$\boldsymbol{\mathit{Y}}^{(3, 1)} = 
\begin{pmatrix}
3.2
\end{pmatrix}$,
we have
\begin{align*}
\textstyle
\boldsymbol{\mathit{U}}^{(4)} &= 
\boldsymbol{\mathit{U}}^{(3)} + 
(\boldsymbol{\mathit{X}}^{(3, 1)})^\top
\boldsymbol{\mathit{X}}^{(3, 1)}
=
\boldsymbol{\mathit{U}}^{(3)} + 
\begin{pmatrix}
1 & 2.9 \\
2.9 & 8.41
\end{pmatrix}
,\\
\boldsymbol{\mathit{V}}^{(4)} &= 
\boldsymbol{\mathit{V}}^{(3)} + 
(\boldsymbol{\mathit{X}}^{(3, 1)})^\top
\boldsymbol{\mathit{Y}}^{(3, 1)}
=
\boldsymbol{\mathit{V}}^{(3)} + 
\begin{pmatrix}
3.2 \\
9.28
\end{pmatrix}
,\\
\phi_1^{(4)} &= 
(\boldsymbol{\mathit{U}}^{(4)} + \alpha\boldsymbol{\mathit{E}})^{-1}
\boldsymbol{\mathit{V}}^{(4)}
=
\begin{pmatrix}
5.56 \\
-0.87
\end{pmatrix}.
\end{align*}
\end{example}

\subsubsection{Incremental Learning Algorithm}
\label{sect-incremental-algorithm}

We revise Algorithm \ref{algorithm-learning} for incremental learning. 
For each $\mathit{t}_i\in\mathit{r}$, 
$\mathit{T}_i^{(\ell+\mathit{h})}\setminus\mathit{T}_i^{(\ell)}$ is retrieved in Formula \ref{equation-neighbor-increment}, 
rather than all the $\ell$ nearest neighbors in Line \ref{line-learning-lNN} in Algorithm \ref{algorithm-learning}.
Referring to Proposition \ref{the-incremental-bottom-up}, 
we incrementally compute
$\boldsymbol{\mathit{U}}^{(\ell+\mathit{h})}, \boldsymbol{\mathit{V}}^{(\ell+\mathit{h})}$ 
from 
$\boldsymbol{\mathit{U}}^{(\ell)}, \boldsymbol{\mathit{V}}^{(\ell)}$, 
together with 
$\boldsymbol{\mathit{Y}}^{(\ell, \Delta_{\mathit{h}})}$
and 
$\boldsymbol{\mathit{X}}^{(\ell, \Delta_{\mathit{h}})}$
on nearest neighbor increments. 
Finally, 
$\phi_i^{(\ell+\mathit{h})}$ is computed by Formula \ref{equation-mid-matrix-phi}.

It is worth noting that incrementally computing $\phi_i^{(\ell+\mathit{h})}$ only needs to cache $\boldsymbol{\mathit{U}}^{(\ell)}, \boldsymbol{\mathit{V}}^{(\ell)}$ in the previous step.
Earlier results such as $\boldsymbol{\mathit{U}}^{(\ell-\mathit{h})}, \boldsymbol{\mathit{V}}^{(\ell-\mathit{h})}$ could be discarded.
Given the same $\mathit{h}$, the incremental computation naturally supports stepping.

\subsubsection{Complexity Analysis}
\label{sect-incremental-complexity}

Table \ref{table-increment-time} lists the major steps and costs for learning  parameter $\phi_i^{(\ell+\mathit{h})}$ in Formula \ref{equation-mid-matrix-phi}. 
As shown, the costs of computing $\boldsymbol{\mathit{U}}$ and $\boldsymbol{\mathit{V}}$ from scratch using Formulas \ref{equation-mid-matrix-U} and \ref{equation-mid-matrix-V} are linear in terms of $\ell$.
With the incremental computation in Formulas \ref{equation-estimate-recursive-u} and \ref{equation-estimate-recursive-v} in Proposition \ref{the-incremental-bottom-up}, 
the costs become irrelevant to $\ell$. 
In other words, 
we reduce the learning cost from linear $O(m^2\ell+m^2h + m^3)$ to constant $O(m^2h + m^3)$ in terms of~$\ell$ tuples.

\begin{table}[t]
 \caption{Time complexity for learning parameter $\phi_i^{(\ell+\mathit{h})}$}
 \label{table-increment-time}
 \centering
 \begin{tabular}{rll}
 \hline\noalign{\smallskip} Computing  & From scratch  & Incremental  \\ \noalign{\smallskip}
 \hline\noalign{\smallskip}
 $\boldsymbol{\mathit{U}}$ & $m^2(\ell+h)$ &  $m^2h$\\ \noalign{\smallskip}
 $\boldsymbol{\mathit{V}}$ & $m(\ell+h)$  & $mh$  \\ \noalign{\smallskip}
 $(\boldsymbol{\mathit{U}})^{-1}$ & $m^3$ & $m^3$ \\ \noalign{\smallskip}
 $(\boldsymbol{\mathit{U}})^{-1}\boldsymbol{\mathit{V}}$ & $m^2$ &  $m^2$ \\ \noalign{\smallskip}
 \hline\noalign{\smallskip}
 \end{tabular}
 \end{table}

\section{Experiment}\label{sect-experiment}

While the theoretical analysis in Section \ref{sect-subsume} proves that our proposal subsumes some existing methods, 
the empirical evaluation particularly concerns 
how \textsf{IIM} outperforms the existing imputation approaches in practice, in Section \ref{experiment-imputation-comparision}.

\begin{table}[t]
 \caption{Dataset summary}
 \label{table-dataset}
 \centering
 \begin{tabular}{lllll}
 \hline\noalign{\smallskip} Dataset & $|\mathit{r}|$ & $|\mathcal{R}|$ & Source & Property \\ \noalign{\smallskip}
 \hline\noalign{\smallskip}
 ASF & 1.5k & 6 & UCI & no clear global regression \\ \noalign{\smallskip}
 CCS & 1k & 6 & UCI &  \\ \noalign{\smallskip}
 CCPP & 10k & 5 & UCI &  \\ \noalign{\smallskip}
 SN & 100k & 2 & UCI &  \\ \noalign{\smallskip}
 PHASE & 10k & 4 & Siemens & a clear global regression \\ \noalign{\smallskip}
 CA & 20k & 9 & KEEL & sparse with high dimension \\ \noalign{\smallskip}
 DA & 7k & 6 & KEEL & \\ \noalign{\smallskip}
 MAM & 1k & 5 & KEEL & real missing, no truth \\ \noalign{\smallskip}
 HEP & 200 & 19 & KEEL & real missing, no truth \\ \noalign{\smallskip}
 \hline\noalign{\smallskip}
 \end{tabular}
\end{table}

\subsection{Settings}

\subsubsection{Datasets}
\label{sect-dataset} 

We employ 9 datasets 
from different sources, 
UCI\footnote{\url{http://archive.ics.uci.edu/ml/datasets/}} \cite{lichman2013uci},
KEEL\footnote{\url{http://sci2s.ugr.es/keel/datasets.php}} \cite{DBLP:journals/mvl/Alcala-FdezFLDG11} and 
Siemens,
with various properties as summarized in Table \ref{table-dataset}. 
For instance, no clear linear regression is observed globally in the ASF dataset, i.e., with heterogeneity problem,
while the PHASE dataset has a clear regression relationship in three-phase electric power. 
The CA dataset involves 9 attributes with higher dimension, which leads to more serious sparsity issue.
The MAM and HEP datasets contain real-world missing values without ground truth, and are used for evaluating the classification application with / without imputation.

\subsubsection{Criteria}
\label{sect-Criteria}

Following the same line of evaluating data quality approaches \cite{DBLP:journals/pvldb/ArocenaGMMPS15},
for each dataset (except the two datasets without ground truth), 
we randomly select a set of tuples as $\{\mathit{t}_\mathit{x}\}$ by removing values on (multiple) attributes $\{\mathit{A}_\mathit{x}\}$ as missing values. 
The remaining tuples are considered as complete tuples in $\mathit{r}$.
When multiple incomplete attributes $\{\mathit{A}_\mathit{x}\}$ exist, we impute them one by one.
RMS error \cite{DBLP:conf/vldb/JefferyGF06} is employed to evaluate the imputation accuracy, 
$
\sqrt{
\frac{
\sum_{\mathit{t}_\mathit{x},\mathit{A}_\mathit{x}}
(\mathit{t}_\mathit{x}[\mathit{A}_\mathit{x}]-\mathit{t}_\mathit{x}'[\mathit{A}_\mathit{x}])^2
}{|\{(\mathit{t}_\mathit{x},\mathit{A}_\mathit{x})\}|}
},
$
where 
$\mathit{t}_\mathit{x}[\mathit{A}_\mathit{x}]$ is the original value (ground truth) of the incomplete attribute, 
and 
$\mathit{t}_\mathit{x}'[\mathit{A}_\mathit{x}]$ is the corresponding imputation. 
The lower the RMS error is, 
the better the imputation accuracy will be, i.e., closer to the truth.

The sparsity issue states that a tuple does not have sufficient neighbors that share the same/similar values. 
In other words, the truth value varies from the values suggested by complete neighbors.  
To evaluate the variance, we employ 
the \emph{coefficient of determination} 
\cite{devore2011probability},
$
\mathit{R}^2=1-\frac{
\sum_{\mathit{t}_\mathit{x}}(\mathit{t}_\mathit{x}[\mathit{A}_\mathit{m}] - \mathit{t}_\mathit{x}'[\mathit{A}_\mathit{m}])^2
}{
\sum_{\mathit{t}_\mathit{x}}(\mathit{t}_\mathit{x}[\mathit{A}_\mathit{m}] - \overline{\mathit{t}_\mathit{i}[\mathit{A}_\mathit{m}]})^2
}
,
$
where $\mathit{t}_i \in \mathit{r}$, 
$\mathit{t}_\mathit{x}[\mathit{A}_\mathit{m}]$ is the truth value, and 
$\mathit{t}_\mathit{x}'[\mathit{A}_\mathit{m}]$ is the value suggested by complete neighbors (e.g., by \textsf{kNN}).
We denote $\mathit{R}^2_S$ the $\mathit{R}^2$ measure on sparsity. 
The lower the measure $\mathit{R}^2_S$ is, 
the more serious the sparsity issue will be in the data. 

The heterogeneity issue states that tuples do not fit a single global model.
Similarly, we evaluate how the truth value varies from the values predicted by the single global model. 
Again, the aforesaid \emph{coefficient of determination} is employed, 
where 
$\mathit{t}_\mathit{x}'[\mathit{A}_\mathit{m}]$ is the value predicted by the single global model (e.g., by \textsf{GLR}).
The lower the measure $\mathit{R}^2_H$ is, 
the more serious the heterogeneity issue will be in the data.

\subsection{Comparison on Imputation Methods}
\label{experiment-imputation-comparision}

This experiment compares our proposal \textsf{IIM} with the existing approaches listed in Table~\ref{table-algorithm}  
in Section \ref{sect:preliminary}. 
We use the MICE implementation\footnote{https://github.com/stefvanbuuren/mice/tree/master/R} of \textsf{PMM} and \textsf{BLR} in R, 
the \textsf{XGB} implementation in R, 
and the existing \textsf{SVD} implementation\footnote{https://github.com/jeffwong/imputation}. 
Other approaches as well as our \textsf{IIM} are implemented in Java. 
Thereby, the corresponding time costs could be compared, e.g., in Figures 
\ref{exp:ca-completenum} and \ref{exp:asf-k}.
While some significantly worse results may not appear in the figures, 
the results of all methods can be found in Tables \ref{table-all-type} and \ref{table:performance}.

\begin{table*}[t]
 \caption{Imputation RMS error of \textsf{IIM} compared to the existing approaches listed in Table \ref{table-algorithm} over various datasets}
 \label{table-all-type}
 \centering
 \begin{tabular}{cccccccccccccccc}
 \hline\noalign{\smallskip}
Dataset & $\mathit{R}^2_S$ & $\mathit{R}^2_H$ & \textsf{IIM} & \textsf{kNN} &\textsf{kNNE} & \textsf{IFC} & \textsf{GMM} & \textsf{SVD} & \textsf{ILLS} & \textsf{GLR} & \textsf{LOESS} & \textsf{BLR} & \textsf{ERACER} & \textsf{PMM} & \textsf{XGB} \\ \noalign{\smallskip}
\hline\noalign{\smallskip}
ASF & 0.85 & 0.73 & \textbf{8.08} & 22.63 & 20.12 & 50.72 & 59.04 & 37.88 & 16.05 & 30.28 & 16.73 & 42.78 & 20.35 & 36.43 & 11.61 \\ \noalign{\smallskip}
CA & 0.03 & 0.90 & \textbf{0.49} & 2.02 & 1.85 & 2.03 & 2.12 & 50.11 & 12.76 & 0.6 & 0.54 & 0.88 & 0.6 & 0.77 & 0.7 \\ \noalign{\smallskip}
CCPP & 0.95 & 0.93 & \textbf{3.75} & 3.98 & 4.13 & 14.08 & 23.09 & 6.79 & 5.78 & 4.58 & 4.25 & 6.55 & 3.97 & 6.19 & 4.45 \\ \noalign{\smallskip}
CCS & 0.63 & 0.56 & \textbf{10.45} & 12.84 & 11.13 & 21.39 & 24.95 & 25.59 & 13.67 & 13.64 & 12.76 & 20.51 & 11.25 & 18.85 & 11.26 \\ \noalign{\smallskip}
DA & 0.65	& 0.68 & \textbf{15.52} & 16.99 & 17.75 & 22.92 & 23.99 & 21.92 & 94.5 & 16.68 & 15.88 & 23.69 & 16.18 & 23.47 & \textbf{15.56} \\ \noalign{\smallskip}
PHASE & 0.9 & 0.91 & \textbf{3.31} & 3.51 & 3.42 & 5.41 & 11.35 & 5.28 & 3.59 & \textbf{3.32} & \textbf{3.32} & 4.73 & \textbf{3.32} & 4.64 & \textbf{3.36} \\ \noalign{\smallskip}
SN & 0.79	& 0.05 & \textbf{0.11} & \textbf{0.12} & \textbf{0.12} & 0.28 & 0.43 & -  & - & 0.27 & 0.20 & 0.4 & \textbf{0.13} & 0.28 & - \\ \noalign{\smallskip}
 \hline\noalign{\smallskip}
 \end{tabular}
\end{table*}

\subsubsection{Imputation on Various Datasets}
\label{sect-experiment-performance-datasets}

For each dataset in Table \ref{table-all-type}, 
we randomly pick $5\%$ tuples as $\mathit{t}_\mathit{x}$ with one missing value on a random attribute $\mathit{A}_\mathit{x}$.
That is, there are $5\%\frac{1}{|\mathcal{R}|}$ missing values w.r.t.\ the total values in each dataset, where $|\mathcal{R}|$ is the number of attributes in the dataset.
For instance, the CCPP data with 5 attributes has  $\frac{5\%}{5}=1\%$ missing values.

When a dataset is with high sparsity but low heterogeneity, i.e., small $\mathit{R}^2_S$ but large $\mathit{R}^2_H$, such as CA in Table \ref{table-all-type}, 
the \textsf{GLR} approach using the predicted value via the regression model 
shows a better imputation performance (RMS=0.6) than 
the \textsf{kNN} method using the (aggregated) value in the complete neighbor tuples (RMS=2.02).

Nevertheless, 
our proposed \textsf{IIM} always shows the lowest imputation error. 
The result is not surprising referring to the theoretical analysis in Section \ref{sect-subsume} that our proposal subsumes \textsf{GLR} and \textsf{kNN} as special cases.

To show applicability, 
we report the results on the larger dataset SN 
in Table \ref{table-all-type}. 
As shown, 
the better imputation result of our proposed \textsf{IIM} is still consistently observed. 
(The results of \textsf{SVD}, \textsf{ILLS} and \textsf{XGB} are not available since they cannot be implemented on only two attributes.)

\begin{table*}[t]
 \caption{Imputation RMS error on various incomplete attribute $\mathit{A}_\mathit{x}$ over ASF dataset with 100 incomplete tuples}
 \label{table:performance}
 \centering
 \begin{tabular}{cccccccccccccccc}
\hline\noalign{\smallskip} 
 & $\mathit{R}^2_S$ & $\mathit{R}^2_H$ & \textsf{IIM} & \textsf{kNN} & \textsf{kNNE} & \textsf{IFC} & \textsf{GMM} & \textsf{SVD} & \textsf{ILLS} & \textsf{GLR} & \textsf{LOESS} & \textsf{BLR} & \textsf{ERACER} & \textsf{PMM} & \textsf{XGB} \\ \noalign{\smallskip}
 \hline\noalign{\smallskip}
$A_1$ & 0.47 & 0.46 & \textbf{192.5} & 235.2 & 247.8 & 326.9 & 334.8 & 320.4 & 248.3 & 234.4 & 201.8 & 328.5 & 206.8 & 289.9 & 204.9 \\ \noalign{\smallskip}
$A_2$ & 0.85 & 0.73 & \textbf{8.08} & 22.63 & 20.12 & 50.72 & 59.04 & 37.88 & 16.05 & 30.28 & 16.73 & 42.78 & 20.35 & 36.43 & 11.61 \\ \noalign{\smallskip}
$A_3$ & 0.73 & 0.5 & \textbf{1.49} & 5.11 & 4.08 & 8.87 & 12.15 & 9.67 & 4.73 & 6.54 & 3.66 & 9.18 & 4.51 & 8.72 & 2.07 \\ \noalign{\smallskip}
$A_4$ & 0.03 & 0.12 & \textbf{12.82} & 15.74 & 13.28 & 15.65 & 16.74 & 15.16 & 17.62 & 14.68 & 13.84 & 21.14 & 14.68 & 20.23 & 13.24 \\ \noalign{\smallskip}
$A_5$ & 0.79 & 0.63 & \textbf{13.85} & 64.94 & 60.29 & 125.58 & 138.22 & 88.76 & 34.9 & 80.54 & 55.95 & 116.65 & 58.01 & 90.53 & 23.23 \\ \noalign{\smallskip}
$A_6$ & 0.78 & 0.51 & \textbf{3.22} & 3.28 & 4.59 & 6.39 & 7.39 & 45.25 & 11.82 & 4.8 & 3.4 & 7.02 & \textbf{2.92} & 6.29 & 15.25 \\ \noalign{\smallskip}
 \hline\noalign{\smallskip}
 \end{tabular}
\end{table*}

\subsubsection{Varying the Missing Attribute $\mathit{A}_\mathit{x}$}
\label{sect-experiment-missing-attribute}

Table \ref{table:performance} reports the results on various incomplete attributes $\mathit{A}_\mathit{x}$ over the ASF data. 
Owing to the different ranges of domain values on various attributes, 
the imputation RMS error differs in attributes.

Approaches perform variously over the attributes with different domain characteristics in terms of sparsity and heterogeneity. 
In Table \ref{table:performance}, 
for attribute $\mathit{A}_4$ with 
small $\mathit{R}^2_S$ (high sparsity) but 
large $\mathit{R}^2_H$ (low heterogeneity),  
the attribute model methods (\textsf{GLR} and \textsf{LOESS} using the values predicted by regression models) perform better than the tuple model methods (\textsf{kNN}  using the aggregated value of complete neighbor tuples).
In contrast, 
for attribute $\mathit{A}_6$ with 
large $\mathit{R}^2_S$ (low sparsity) but small $\mathit{R}^2_H$ (high heterogeneity), 
\textsf{kNN} outperforms \textsf{GLR}.
Nevertheless, 
since our proposal  concerns both sparsity and heterogeneity, 
\textsf{IIM} consistently shows the best performance. 
The results verify the superiority of our proposal.

\begin{figure}[t]\centering
\begin{minipage}{\expwidths}\centering
\hspace{-0.5em}%
\includegraphics[width=\expwidths]{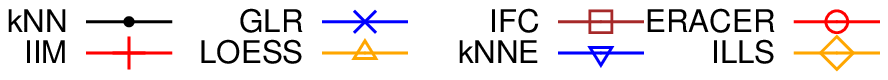} 
\hspace{-0.5em}%
\includegraphics[width=0.5\expwidths]{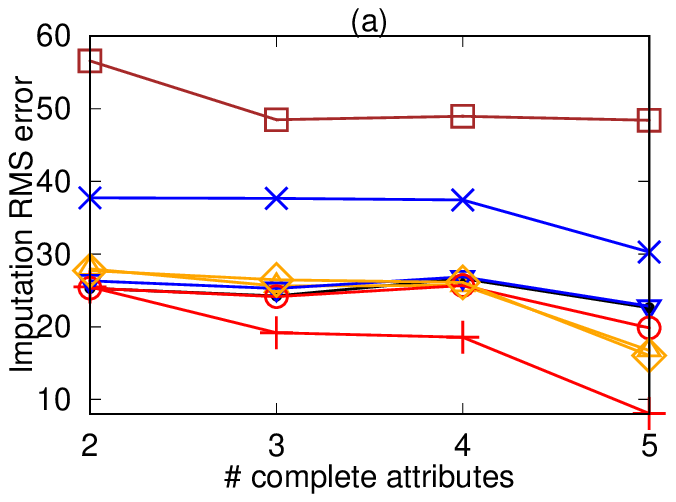}%
\hspace{-0.5em}%
\includegraphics[width=0.5\expwidths]{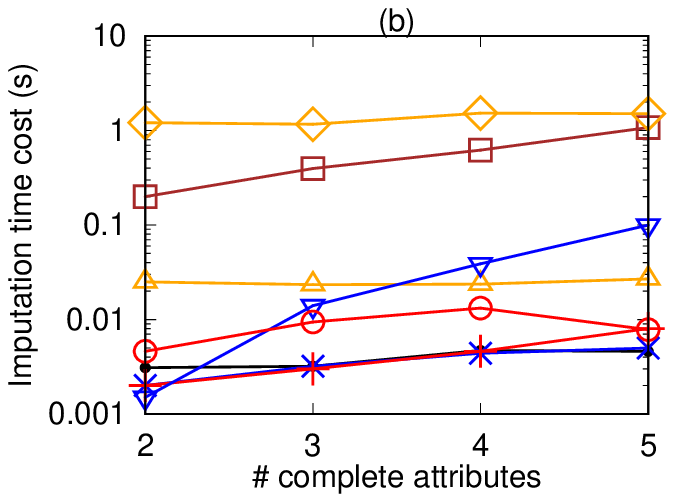}%
\end{minipage}
\caption{Varying the number of complete attributes $|\mathcal{F}|$, 
over ASF with 100 incomplete tuples}
\label{exp:asf-completeAttrNum}
\end{figure}

\begin{figure}[t]\centering
\begin{minipage}{\expwidths}\centering
\hspace{-0.5em}%
\includegraphics[width=\expwidths]{exp-label-hor} 
\hspace{-0.5em}%
\includegraphics[width=0.5\expwidths]{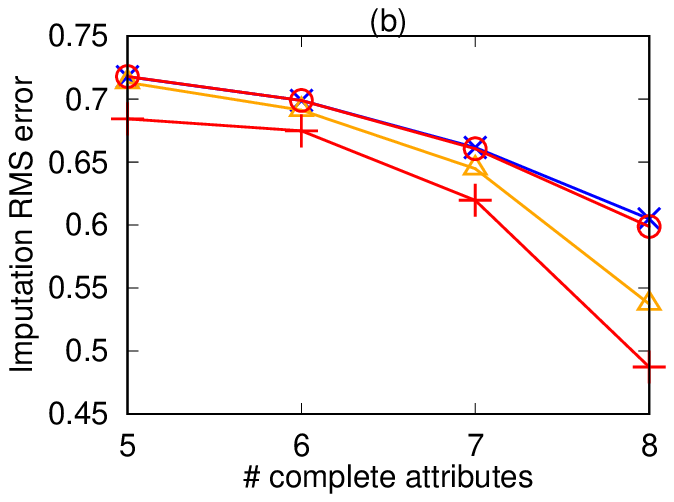}%
\hspace{-0.5em}%
\includegraphics[width=0.5\expwidths]{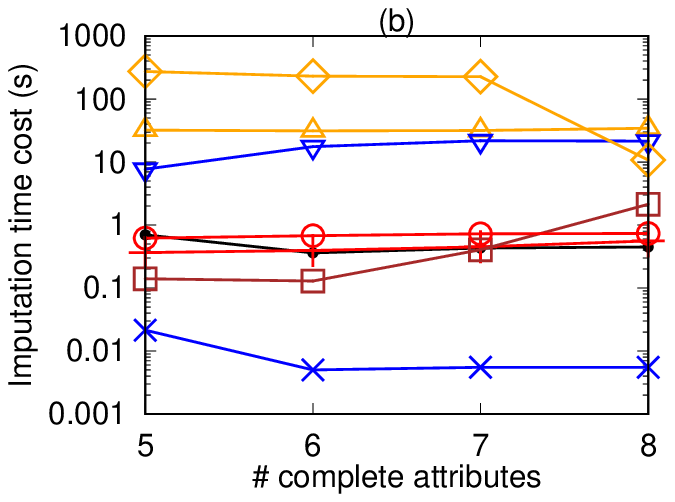}%
\end{minipage}
\caption{Varying the number of complete attributes $|\mathcal{F}|$, 
over CA with 1k incomplete tuples}
\label{exp:ca-completeAttrNum}
\end{figure}

\subsubsection{Varying the Number of Complete Attributes $|\mathcal{F}|$}
\label{sect-experiment-varying-complete-attributes}

When preparing the datasets, 
we randomly pick a certain percent (\%) tuples as $\mathit{t}_\mathit{x}$ with one missing value on a random attribute $\mathit{A}_\mathit{x}$.
By default, 
all the remaining attributes are used as complete neighbors for imputation, 
i.e., $\mathcal{F}=\mathcal{R}\setminus\{\mathit{A}_\mathit{x}\}$. 
In order to evaluate the imputation with different sizes of complete attributes, 
the experiments in 
Figures \ref{exp:asf-completeAttrNum} 
and \ref{exp:ca-completeAttrNum} 
consider a subset of $\mathcal{R}\setminus\{\mathit{A}_\mathit{x}\}$ as the complete attributes $\mathcal{F}$. 
For instance, a number of complete attributes $|\mathcal{F}|=2$ in the x axis denotes 
$\mathcal{F}=\{\mathit{A}_1,\mathit{A}_2\}$, 
instead of considering all the attributes in $\mathcal{R}\setminus\{\mathit{A}_\mathit{x}\}=\{\mathit{A}_1,\mathit{A}_2,\mathit{A}_3,\dots\}$ as complete attributes.

Figures \ref{exp:asf-completeAttrNum} 
and \ref{exp:ca-completeAttrNum} 
present the results on various number of complete attributes $|\mathcal{F}|$. 
For most approaches, it is not surprising that imputation improves under more complete attributes. 
Specifically, with more attributes in $\mathcal{F}$, the regression from $\mathcal{F}$ to $\mathit{A}_\mathit{x}$ will be more reliable (if exists). 
Furthermore, the neighbors found w.r.t.\ larger $\mathcal{F}$ are more likely to share values. 
With both aforesaid benefits, our \textsf{IIM} shows  more significant improvements when $\mathcal{F}$ is large.

Figures \ref{exp:asf-completeAttrNum}(b) and \ref{exp:ca-completeAttrNum}(b) report the time cost of \textsf{IIM} in the imputation phase 
(the offline learning phase only needs to be processed once for imputing different incomplete tuples).
In contrast, \textsf{LOESS} and \textsf{ILLS} need to online learn the local regression over the neighbors of the input incomplete tuple, 
and thus have high imputation time cost.
It is not surprising that \textsf{IIM} shows similar time cost as \textsf{kNN},
since both approaches need to find $k$ nearest neighbors.

\begin{figure}[t]\centering
\begin{minipage}{\expwidths}\centering
\hspace{-0.5em}%
\includegraphics[width=\expwidths]{exp-label-hor} 
\hspace{-0.5em}%
\includegraphics[width=0.5\expwidths]{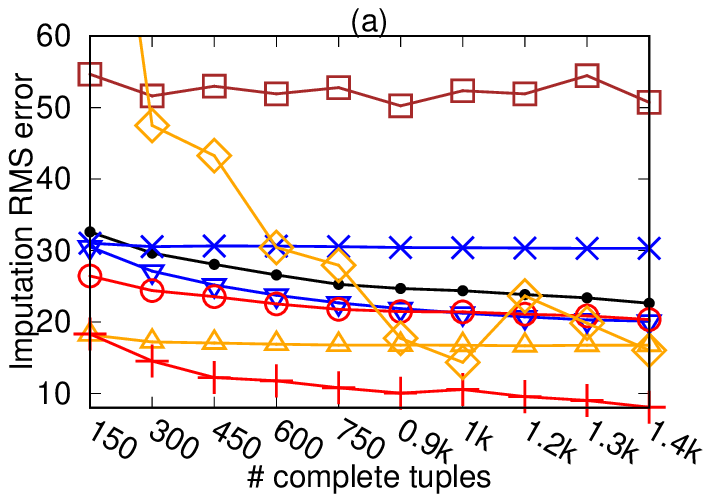}%
\hspace{-0.5em}%
\includegraphics[width=0.5\expwidths]{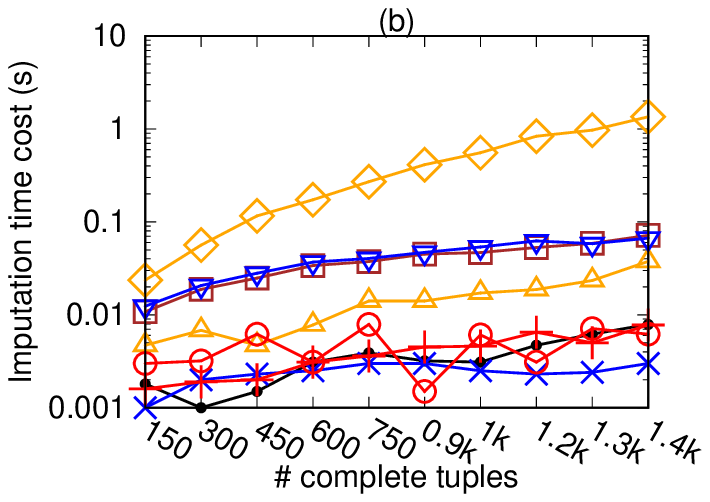}%
\end{minipage}
\caption{Varying the number of complete tuples $\mathit{n}=|\mathit{r}|$, over ASF with 100 incomplete tuples}
\label{exp:asf-completenum}
\end{figure}

\begin{figure}[t]\centering
\begin{minipage}{\expwidths}\centering
\hspace{-0.5em}%
\includegraphics[width=\expwidths]{exp-label-hor} 
\hspace{-0.5em}%
\includegraphics[width=0.5\expwidths]{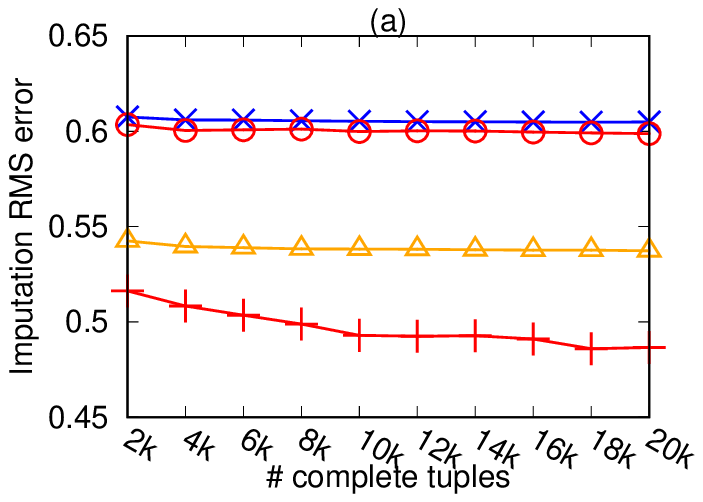}%
\hspace{-0.5em}%
\includegraphics[width=0.5\expwidths]{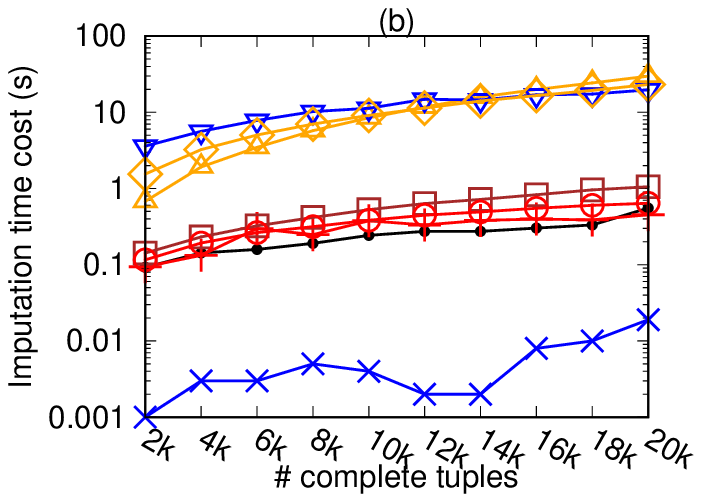}%
\end{minipage}
\caption{Varying the number of complete tuples $\mathit{n}=|\mathit{r}|$, over CA with 1k incomplete tuples}
\label{exp:ca-completenum}
\end{figure}

\subsubsection{Varying the Number of Complete Tuples $\mathit{n}=|\mathit{r}|$}

Figures \ref{exp:asf-completenum} and \ref{exp:ca-completenum} report the results 
by randomly selecting $\mathit{n}$ tuples from the dataset as $\mathit{r}$ of complete tuples. 
Generally, more complete tuples lead to better imputation performance. 
The interesting result in Figure \ref{exp:asf-completenum}(a) is that \textsf{kNN} relies more on complete tuples to achieve lower imputation error, since it requires the presence of sufficient neighbors sharing  similar values. 
Our \textsf{IIM} utilizing the individual regressions of tuples benefits from more complete tuples as well.

\begin{figure}
\begin{minipage}{\expwidths}\centering
\hspace{-0.5em}%
\includegraphics[width=\expwidths]{exp-label-hor} 
\hspace{-0.5em}%
\includegraphics[width=0.5\expwidths]{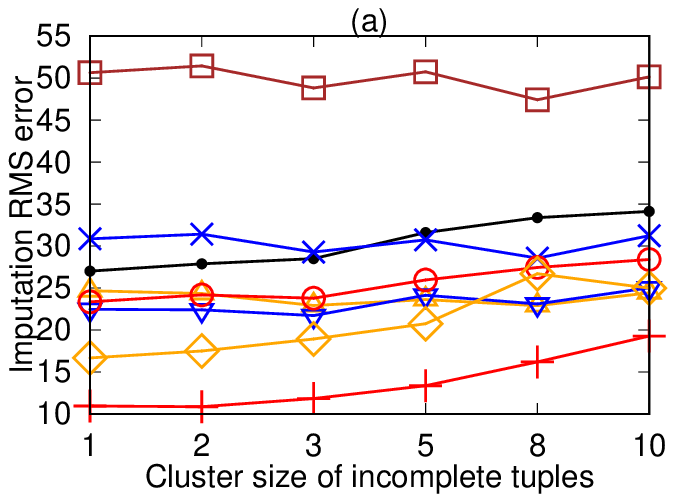}%
\hspace{-0.5em}%
\includegraphics[width=0.5\expwidths]{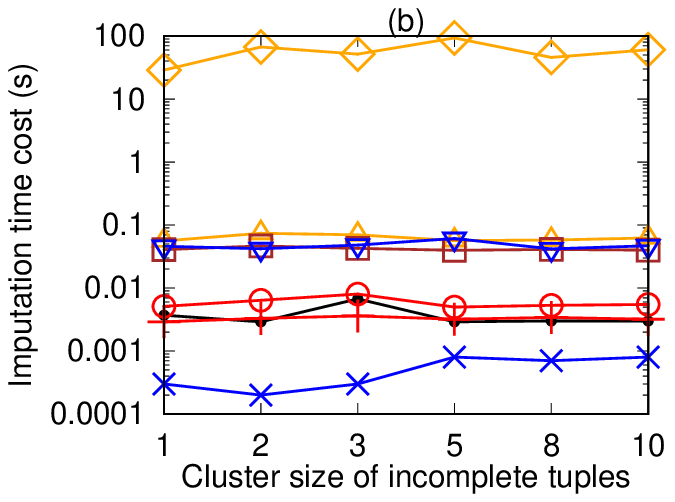}%
\end{minipage}
\caption{Varying the cluster size of incomplete tuples, over ASF with 100 incomplete tuples in total}
\label{exp:asf-mk}
\end{figure}

\subsubsection{Varying the Cluster Size of Incomplete Tuples}
\label{sect-experiment-cluster-incomplete}

Rather than introducing missing values in random tuples,  
we consider incomplete tuples that cluster together.
That is, complete neighbors are very far away.
Figure \ref{exp:asf-mk} reports the results under various sizes of  incomplete tuple clusters.
For example, a cluster size 3 denotes that the 2 closest neighbors are also incomplete tuples. 
It is not surprising that 
with the increase of incomplete tuple cluster size,
all the tuple model based imputation methods relying on the closest neighbors (e.g., \textsf{kNN}, \textsf{ILLS}) become worse.
On the other hand, the attribute model based methods (such as \textsf{GLR} or \textsf{LOESS}) are relatively stable.
Again, our proposed \textsf{IIM} still shows the best performance, 
since it does not rely on the neighbor tuples to share the same values, 
and thus can cope with the sparsity issue introduced by the clusters of incomplete tuples.

\subsection{Evaluation on Individual Learning}
\label{experiment-individual}

In this section, we evaluate the characteristic of proposed techniques on the following aspects to show the performance and rational behind \textsf{IIM}.

\subsubsection{Varying the Number of Imputation Neighbors $\mathit{k}$}
\label{experiment-imputation-number}

\begin{figure}[t]\centering
\begin{minipage}{\expwidths}\centering
\hspace{-0.5em}%
\includegraphics[width=\expwidths]{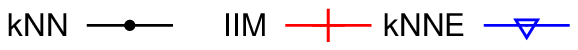}
\hspace{-0.5em}%
\includegraphics[width=0.5\expwidths]{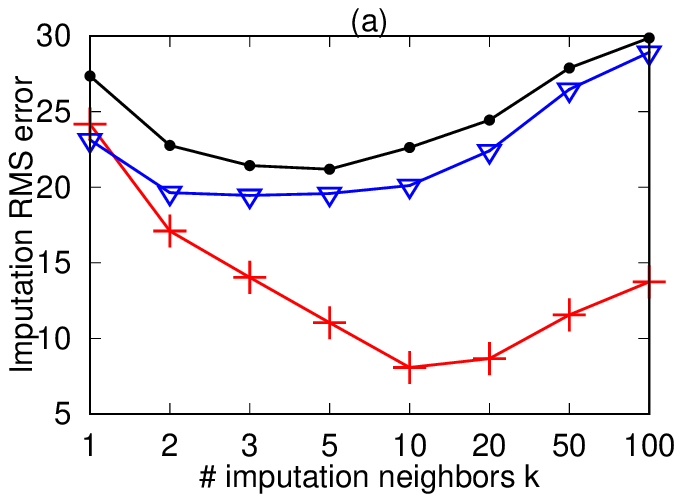}%
\hspace{-0.5em}%
\includegraphics[width=0.5\expwidths]{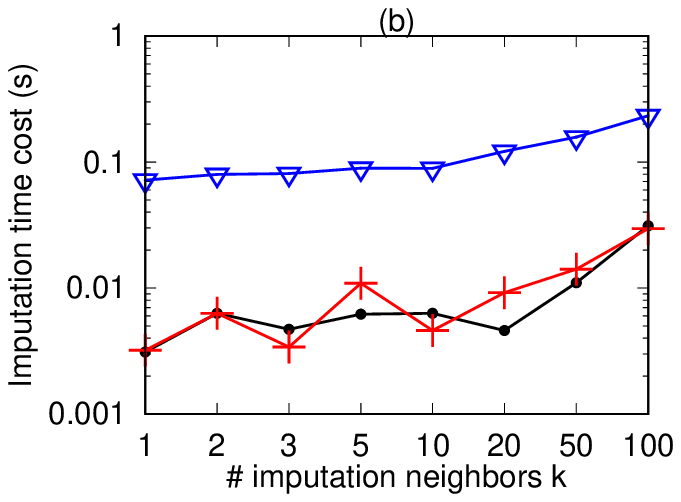}%
\end{minipage}
\caption{Varying the number of imputation neighbors $\mathit{k}$, over ASF with 100 incomplete tuples}
\label{exp:asf-k}
\end{figure}

\begin{figure}[t]\centering
\begin{minipage}{\expwidths}\centering
\hspace{-0.5em}%
\includegraphics[width=\expwidths]{exp-label-k} 
\hspace{-0.5em}%
\includegraphics[width=0.5\expwidths]{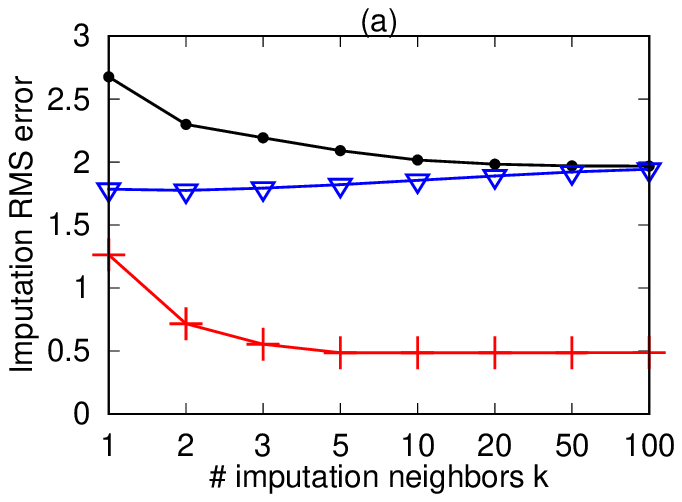}%
\hspace{-0.5em}%
\includegraphics[width=0.5\expwidths]{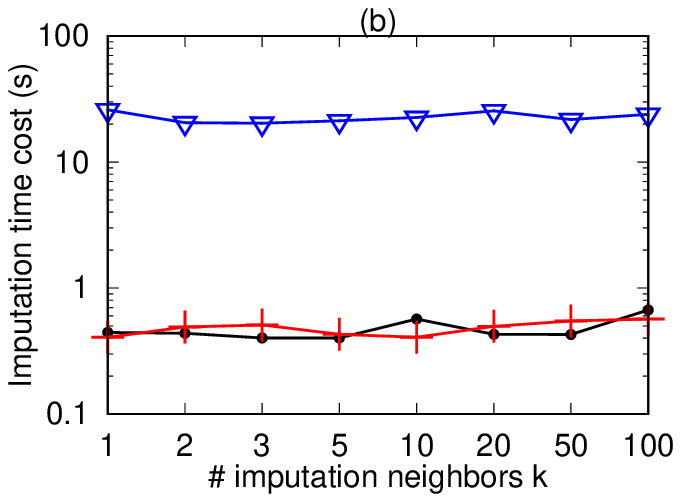}%
\end{minipage}
\caption{Varying the number of imputation neighbors $\mathit{k}$, over CA with 1k incomplete tuples}
\label{exp:ca-k}
\end{figure}

This experiment evaluates various number of imputation neighbors $\mathit{k}$. 
It is used in both \textsf{kNN}, \textsf{kNNE} and our \textsf{IIM} (in Algorithm \ref{algorithm-imputation} of imputation phase). 
Figures \ref{exp:asf-k} and \ref{exp:ca-k} report the results on ASF (having heterogeneity issues) and CA (having sparsity property) with $5\%$ incomplete tuples. 
Generally, a moderately large $\mathit{k}$ is preferred. 
If $\mathit{k}$ is too small, it is not reliable to support the imputation. 
On the other hand, if $\mathit{k}$ is too large, irrelevant tuples may distract the imputation, 
as illustrated in Figure \ref{exp:asf-k}(a). 
For the CA data with sparsity issue
in Figure \ref{exp:ca-k}(a), 
changing the number of neighbors $\mathit{k}$ does not help much in imputation. 
(Some significantly worse results do not appear in the figure, such as \textsf{kNN} as shown in Table \ref{table-all-type}.)

\begin{figure}[t]\centering
\begin{minipage}{\expwidths}\centering
\hspace{-0.5em}%
\includegraphics[width=0.9\expwidths]{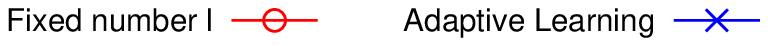} 
\hspace{-0.5em}%
\includegraphics[width=0.5\expwidths]{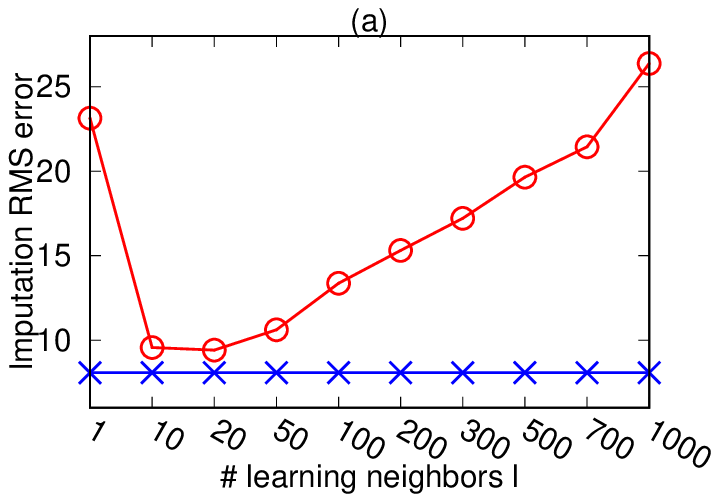}%
\hspace{-0.5em}%
\includegraphics[width=0.5\expwidths]{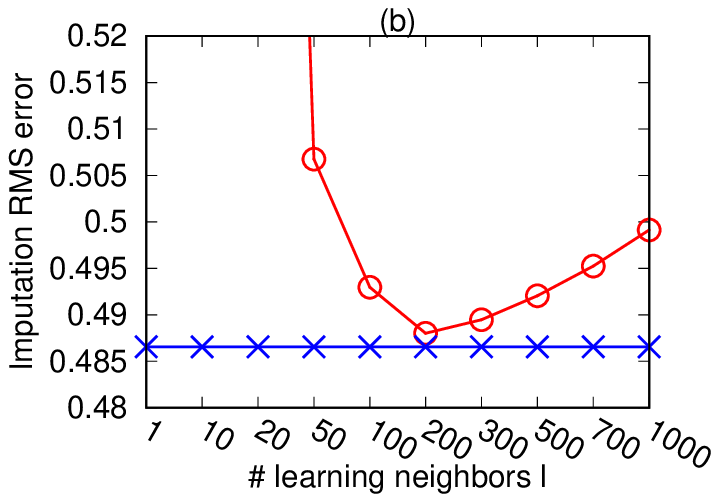}%
\end{minipage}
\caption{Comparison between adaptive learning and the learning over various fixed number $\ell$ of learning neighbors, over (a) ASF and (b) CA}
\label{exp:adaptive-fixl}
\end{figure}

\subsubsection{Evaluating Adaptive Learning}\label{sect-experiment-adaptive}

This experiment evaluates two aspects: 
(1) how the fixed number $\ell$ of learning neighbors for all tuples in Algorithm \ref{algorithm-learning} affects the imputation results; 
and 
(2) does the adaptive learning with distinct number of learning neighbors for different tuple in Algorithm \ref{algorithm-adaptive} truly improve the imputation? 

First, 
as shown in Figure \ref{exp:adaptive-fixl}, 
a small number $\ell$ of learning neighbors may suffer from the overfitting problem and lead to poor imputation.
On the other hand, when $\ell$ is too large, 
the learned individual model may suffer from the heterogeneity problem (under-fitting) and hence also has bad performance.
Manually choosing a proper $\ell$ is non-trivial, 
which is very different from datasets as illustrated in Figures \ref{exp:adaptive-fixl} (a) and (b).

Nevertheless, 
the proposed Adaptive Learning Algorithm \ref{algorithm-adaptive} can successfully address this problem, 
by adaptively considering a distinct number $\ell$ of learning neighbors for each tuple individually.
As illustrated in Figure \ref{exp:adaptive-fixl}, the performance of adaptive learning is better than setting a fixed $\ell$ for all tuples.

\begin{figure}[t]\centering
\begin{minipage}{\expwidths}\centering
\hspace{-0.5em}%
\includegraphics[width=0.9\expwidths]{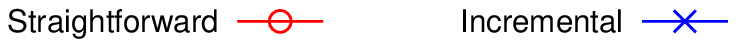} 
\hspace{-0.5em}%
\includegraphics[width=0.5\expwidths]{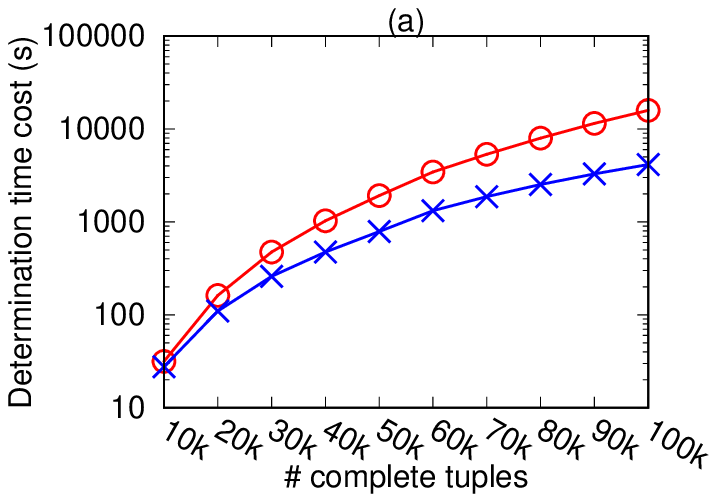}%
\hspace{-0.5em}%
\includegraphics[width=0.5\expwidths]{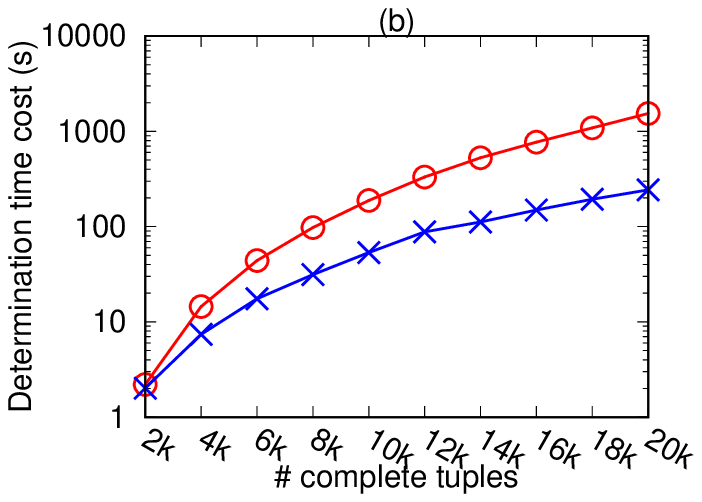}%
\end{minipage}
\caption{Scalability of adaptive learning (with straightforward and incremental computation) on the number $\mathit{n}$ of tuples in $\mathit{r}$, over (a)  SN and 
(b)  CA }
\label{exp:determineL-time}
\end{figure}

\subsubsection{Evaluating Incremental Learning}
\label{sect-experiment-incremental}

Figure \ref{exp:determineL-time} reports the time cost of adaptive learning using straightforward and incremental computation
(with stepping $\mathit{h}=50$) under various number $\mathit{n}$ of tuples in $\mathit{r}$.
The incremental learning algorithm devised in Section \ref{sect-incremental-algorithm} 
shows up to one order of magnitude improvement compared to 
the straightforward adaptive learning Algorithm \ref{algorithm-adaptive}. 
The result is not surprising, 
since the incremental computation reduces the time cost of parameter learning from linear to constant (in terms of $\ell$), 
as shown in Table \ref{table-increment-time} in Section \ref{sect-incremental-complexity}.
To show scalability, 
we report time cost of adaptive learning on SN in Figure \ref{exp:determineL-time}(a).
Again, the result is generally similar to the CA dataset with 20k tuples.

\begin{figure}[t]\centering
\begin{minipage}{\expwidths}\centering
\hspace{-0.5em}%
\includegraphics[width=0.9\expwidths]{exp-label-inc} 
\hspace{-0.5em}%
\includegraphics[width=0.5\expwidths]{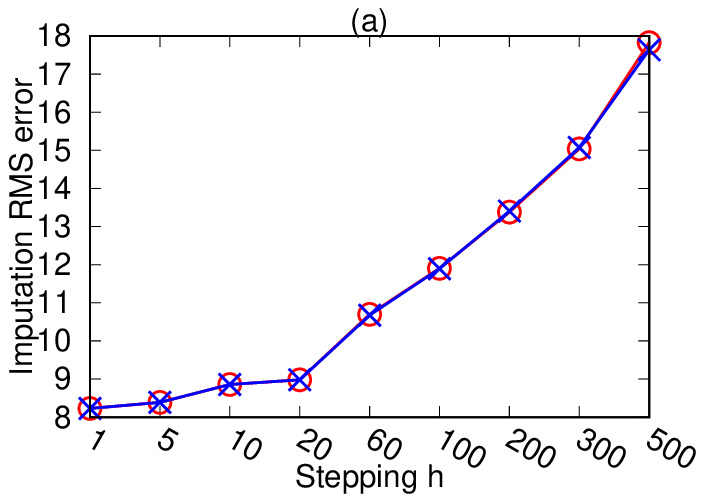}%
\hspace{-0.5em}%
\includegraphics[width=0.5\expwidths]{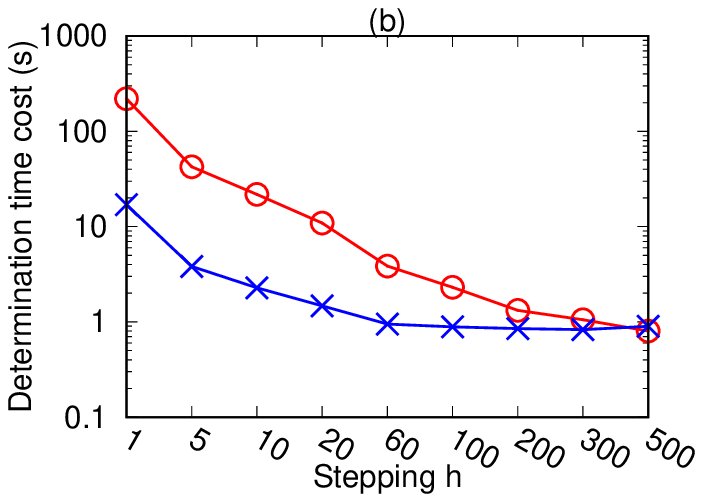}%
\end{minipage}
\caption{Varying stepping $\mathit{h}$ over ASF}
\label{exp:asf-stepping}
\end{figure}

\subsubsection{Tradeoff via Stepping}
\label{sect-experiment-stepping}

Figure \ref{exp:asf-stepping} 
present the results on varying the stepping $\mathit{h}$ studied in Section \ref{sect-stepping}. 
The smaller the $\mathit{h}$ is, the more the candidate $\ell$ values are considered.
When $\mathit{h}=1$, all the possible $\ell$ values are evaluated.
It is not surprising that 
a small stepping $\mathit{h}$ with more candidate $\ell$ values considered leads to lower imputation error in Figure \ref{exp:asf-stepping}(a), 
while the corresponding time cost is higher in Figure \ref{exp:asf-stepping}(b).
The exactly same imputation errors of straightforward and incremental determination algorithms verify the correctness of incremental computation. 
Figure \ref{exp:asf-stepping}(b) demonstrates again the significant improvement in time cost by the incremental determination algorithm.

\subsection{Applications with Imputation}
\label{sect-application}

\begin{table*}[t]
 \caption{Clustering purity on ASF \& CA, and Classification f1-score on MAM \& HEP with real missing values}
 \label{table:application}
 \centering
 \begin{tabular}{cccccccccccccccccc}
 \hline\noalign{\smallskip} & Missing & \textsf{IIM} & \textsf{Mean} & \textsf{kNN} & \textsf{kNNE} & \textsf{IFC} & \textsf{GMM} & \textsf{SVD} & \textsf{ILLS} & \textsf{GLR} & \textsf{LOESS} & \textsf{BLR} & \textsf{ERACER} & \textsf{PMM} & \textsf{XGB} \\ \noalign{\smallskip}
 \hline\noalign{\smallskip}
ASF & 0.697 & \textbf{0.916} & 0.883 & 0.898 & 0.901 & 0.891 & 0.883 & 0.737 & 0.799 & 0.902 & 0.908 & 0.876 & 0.912 & 0.898 & 0.904 \\ \noalign{\smallskip}
CA & 0.652 & \textbf{0.817} & 0.722 & 0.699 & 0.735 & 0.693 & 0.727 & 0.632 & 0.642 & 0.752 & 0.786 & 0.719 & 0.753 & 0.796 & 0.665 \\ \noalign{\smallskip}
 \hline\noalign{\smallskip}
MAM & 0.822 & \textbf{0.828} & 0.822 & 0.82 & 0.819 & 0.82 & 0.816 & 0.818 & 0.817 & 0.819 & 0.817 & 0.823 & 0.818 & 0.814 & 0.825 \\ \noalign{\smallskip}
HEP & 0.847 & \textbf{0.865} & 0.823 & 0.845 & 0.839 & 0.819 & 0.839 & 0.832 & 0.845 & 0.819 & 0.845 & 0.826 & 0.813 & 0.832 & 0.826 \\ \noalign{\smallskip}
 \hline\noalign{\smallskip}
 \end{tabular}
\end{table*}

It is known that dirty data may seriously mislead applications like clustering \cite{DBLP:conf/kdd/SongLZ15}.
To demonstrate the effectiveness of imputation in real applications, we consider the clustering and classification tasks over the data with and without missing data imputation, 
employing the \textsf{kmeans} and \textsf{ibk} (kNN classifier) implementations provided by Weka\footnote{\url{http://www.cs.waikato.ac.nz/ml/weka/}}, respectively.

\subsubsection{Clustering Application}
\label{sect-application-clustering}

The clustering algorithm is first performed over the original dataset without missing values. 
The returned cluster labels are served as ground truth. 
Missing values are then randomly introduced in the dataset as described in Section \ref{sect-Criteria}.
We apply various approaches to impute the missing values. 
The clustering algorithm is conducted again over the data with missing values and the imputed dataset.

We evaluate the accuracy of the clustering results over the data with/without imputation, by comparing to the truth clusters obtained from the original complete data. 
The purity \cite{DBLP:books/mk/HanKP2011} measure is employed, 
which counts for each cluster the number of data points from the most common class (truth cluster).
The higher the purity is, the better the imputation improves clustering.

The first two lines of Table \ref{table:application} show the clustering results.
The results are generally analogous to the imputation errors in 
Table \ref{table-all-type}.
Methods with lower imputation error lead to higher clustering purity.
The highest clustering accuracy by \textsf{IIM} demonstrates again the superiority of our proposal.

We also report the clustering results by simply discarding the incomplete tuples with missing values. 
As shown in the first column,
the clustering accuracy of the remaining complete tuples after discarding  significantly drops. 
The reason is that by removing many incomplete tuples, the dataset becomes even more incomplete and fails to form accurate clusters.
The results verify the motivation of imputing missing values rather than simply discarding incomplete tuples.

\subsubsection{Classification with Real-World Missing Values}

The classification application is evaluated on the MAM, 
and HEP datasets, 
where each tuple is labeled with classes and real-world missing values are naturally embedded without ground truth.
We use 5-fold cross validation,
where missing values exist both in training and testing sets.
The last three lines in Table \ref{table:application} report 
the f1-score accuracy of classification with and without imputation.
Our \textsf{IIM} with better imputation performance (in the previous experiments) shows again more significant improvement of classification accuracy.

\section{Conclusions}
\label{sect:conclusion}

To cope with the challenges of sparsity (no sufficient similar neighbors) and heterogeneity (tuples do not fit the same regression model)
in imputing numerical data,
we propose \textsf{IIM}, Imputation via Individual Models.
The rationale of our proposal is illustrated first by theoretically proving that 
some existing approaches are indeed special cases of \textsf{IIM} under extreme settings 
(i.e., $\ell=1$ or $\ell=\mathit{n}$ in Propositions \ref{the:l0-equal} and \ref{the-LR-equal}). 
It further motivates us to select a proper number $\ell$ of learning neighbors (in between the extreme $1$ and $\mathit{n}$) to avoid over-fitting or under-fitting. 
Again, owing to the heterogeneity issue, 
the number $\ell$ of learning neighbors could be different for learning the individual models of different tuples.
Through a validation step, 
we adaptively determine a model for each complete tuple 
that can best impute other tuples (in validation).
Efficient incremental computation is devised for adaptive learning,  
where the time complexity of learning a model reduces from linear to constant.
Experiments on read data 
demonstrate the superiority of our proposal. 

Future studies may further consider to answer queries directly over multiple imputation candidates suggested by different individual models \cite{DBLP:conf/sigmod/LianCS10}, 
rather determining exactly one imputation.    
Moreover, instead of study the regression models directly on data values, one may further investigate the dependency models \cite{DBLP:journals/vldb/Song0Y13} on the distances of values \cite{DBLP:journals/isci/SongZ014} for imputation. 

\subsubsection*{Acknowledgement}
This work is supported 
by
National Key Research Program of China under Grant 2016YFB1001101;
China NSFC under Grants 61572272 and 71690231.

\bibliographystyle{abbrv}
\bibliography{missing}

\vspace{-4em}
\includegraphics[height=0in]{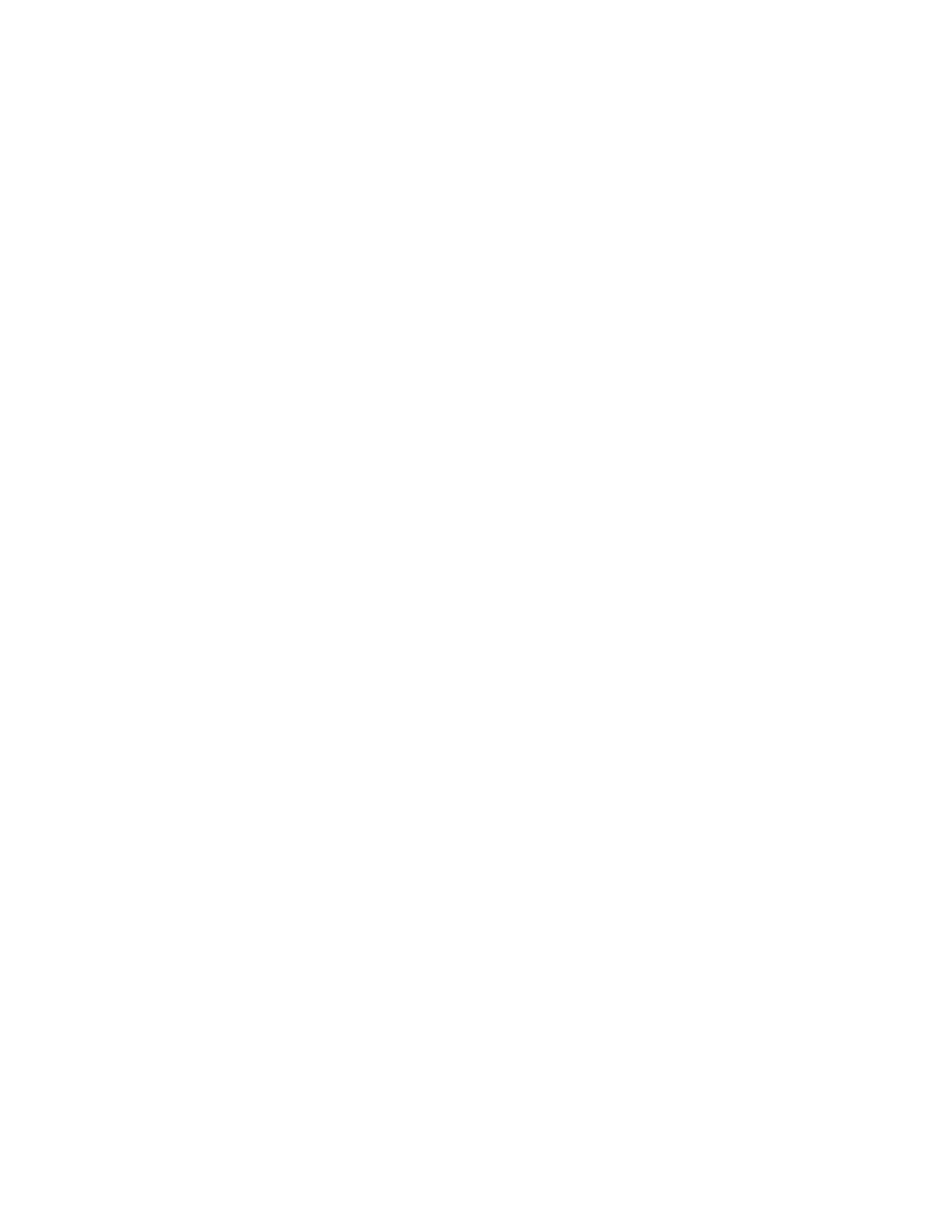}
\end{document}